\newtheorem{theorem}{Theorem}[section]
\newtheorem{lemma}{Lemma}[section]
\newcommand{\be}{\begin{equation}}
\newcommand{\ee}{\end{equation}}
\newcommand{\E}{\mathrm{e}}
\newcommand{\I}{\mathrm{i}}
\newcommand{\re}{\mathrm{Re}}
\newcommand{\im}{\mathrm{Im}}
\newcommand*{\mailto}[1]{\href{mailto:#1}{\nolinkurl{#1}}}
\newcommand{\msc}[1]{\href{http://www.ams.org/msc/msc2010.html?t=&s=#1}{#1}}
\numberwithin{equation}{section}
\begin{document}

\title{A General Uncertainty Principle for  Partial Differential Equations}

\author[I. Alvarez-Romero]{Isaac Alvarez-Romero}
\address{Faculty of Mathematics\\University of Vienna\\Oskar-Morgenstern-Platz 1\\1090 Wien\\Austria}
\email{\mailto{isaac.alvarez.romero@univie.ac.at }\\
\mailto{isaacalrom@gmail.com}}

\thanks{{\it Research supported by the FWF Der Wissenschaftsfonds Project M-2335-N32.}\\ Journal of Mathematical Analysis and Applications (to appear)}

\keywords{Uncertainty principle, scattering theory, Korteweg-de Vries equation, nonlinear Schr\"odinger equation, unique continuation}
\subjclass[2010]{Primary \msc{35B65}, \msc{35Q53},\msc{35Q55},\msc{37K40}; Secondary  \msc{35P25},\msc{35L05},\msc{37K10},\msc{37K15}}

\begin{abstract}
We consider the coupled equations
\begin{equation*}
\begin{pmatrix}r_t\\ -q_t\end{pmatrix}+2A_0(L^+)\begin{pmatrix}r\\ q\end{pmatrix}=0,
\end{equation*}
where $L^+$  is the integro-differential operator
\begin{equation*}
L^+=\frac{1}{2\I}\begin{pmatrix}\partial_x-2r\int_{-\infty}^xdyq& 2r\int_{-\infty}^xdyr\\ 
-2q\int_{-\infty}^xdyq& -\partial_x+2q\int_{-\infty}^xdyr.\end{pmatrix}
\end{equation*}
and $A_0(z)$ is an arbitratry ratio of entire functions. We study two main cases: the first one when the potentials $|q|,|r|\to 0$ as $|x|\to\infty$ and the second one when $r=-1$ and $|q|\to0$ as $|x|\to\infty$. In such conditions we prove that there cannot exist a solution different from zero if at two different times the potentials have a strong decay. This  decay is of exponential rate: $\exp(-x^{1+\delta})$, $x\geq 0$ and $\delta>0$ is a constant. As particular cases we will cover the Korteweg-de Vries equation, the modified Korteweg-de Vries equation and the nonlinear Schr\"odinger equation.
\end{abstract}

\maketitle

\section{Introduction}

In this paper we consider the coupled equations
\begin{equation}\label{NLSq:10}
\begin{pmatrix}r_t\\ -q_t\end{pmatrix}+2A_0(L^+)\begin{pmatrix}r\\ q\end{pmatrix}=0,
\end{equation}
where $L^+$  is the integro-differential operator
\begin{equation*}
L^+=\frac{1}{2\I}\begin{pmatrix}\partial_x-2r\int_{-\infty}^xdyq& 2r\int_{-\infty}^xdyr\\ 
-2q\int_{-\infty}^xdyq& -\partial_x+2q\int_{-\infty}^xdyr\end{pmatrix}
\end{equation*}
and $A_0(z)$ is an arbitrary ratio of entire functions, which is directly related with the dispersion relation of the linearized version of \eqref{NLSq:10}, and the potentials $(r,q)$ also fulfill some decay condition as $|x|\to \infty$.

We prove that a solution of \eqref{NLSq:10} cannot decay faster than an exponential rate at two different times,  $t_0,t_1$, unless it is trivial, that is, if  $|r(x,t_i)|\leq C_1 \E^{-C_2 x^{1+\delta}}$, $|q(x,t_i)|\leq C_3 \E^{-C_4 x^{1+\beta}}$, for $x\geq 0$, $t_i=t_0,t_1$ and $C_i>0$, $i=1,2,3,4$,  $\delta, \beta$ positive constants, then $r,q\equiv 0$ for all $x\in\mathbb{R}$ and any time $t$. How small these two constants $\alpha,\beta$ can be, depends directly on $A_0(z)$. For instance, if $A_0(z)=-4\I z^3$ and $r=-1$, then \eqref{NLSq:10} turns into the Korteweg-de Vries equation and we will prove that $\delta >1/2$, and if $A_0(z)=-2\I z^2$ and $r=-q^*$, where '$*$' denotes the complex conjugate, then \eqref{NLSq:10} turns into the nonlinear Schr\"odinger equation and $\delta>1$. Similarly, one can observe that if $r=q$ and $A_0(z)=-4\I z^3$, then we obtain the modified Korteweg-de Vries equation, and $\delta>1/2$.

In recent years there has been a reinterpretation of the uncertainty principle given by G. H. Hardy in \cite{H} as a dynamic version for the free Schr\"odinger equation, see \cite{CEKPV} and the references therein, that is, if $u(t,x)$ is a solution for $\partial_t u=\I\Delta u$ and $|u(0,x)|=O(\E^{-x^2/\beta^2})$, $|u(1,x)|=O(\E^{-x^2/\alpha^2})$, with $1/\alpha\beta>1/4$, then $u\equiv 0$ and if $1/\alpha\beta=1/4$, then the initial data is a multiple of $\E^{-(1/\beta^2+\I/4)x^2}$.  To prove this result one can use techniques of real analysis, in particular Carleman estimates and properties of the log-convexity of some special functions.  In addition, the authors L. Escauriaza, C. E. Kenig, G. Ponce and L. Vega  also manage to use similar arguments to prove an uncertainty principle for nonlinear problems, such as the Korteweg-de Vries equation in \cite{EKPV} and the nonlinear Schr\"odinger equation in \cite{EKPV1} as one can see in the following theorems:

\begin{theorem}(EKPV)\label{th001}
Let $u_1,u_2\in C([0,1]:H^3(\mathbb{R})\cap L^2(|x|^2dx))$ be strong solutions of 
\begin{equation*}
\partial_tu +\partial_x^3u+u^k\partial_xu=0,\quad\quad(x,t)\in\mathbb{R}^2, k\in\mathbb{Z}^+
\end{equation*}
in the domain $(x,t)\in\mathbb{R}\times[0,1]$. If
\begin{equation*}
u_1(\cdot,0)-u_2(\cdot,0),\quad u_1(\cdot,1)-u_2(\cdot,1)\in H^1\Big(\E^{ax_+^{3/2}}dx\Big),
\end{equation*}
for any $a>0$, then $u_1\equiv u_2$.
\end{theorem}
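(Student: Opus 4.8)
The plan is to reduce the nonlinear problem to a quantitative unique-continuation statement for a single linear variable-coefficient equation, and then to play a Carleman lower bound off against an upper bound coming from persistence of the Gaussian-type decay.

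First I would set $w:=u_1-u_2$. Writing $u_1^k\partial_x u_1-u_2^k\partial_x u_2=u_1^k\,\partial_x w+(u_1^k-u_2^k)\,\partial_x u_2$ and factoring $u_1^k-u_2^k=w\sum_{j=0}^{k-1}u_1^{\,j}u_2^{\,k-1-j}$, one sees that $w$ solves a linear equation
\[
\partial_t w+\partial_x^3 w+a_1(x,t)\,\partial_x w+a_0(x,t)\,w=0,
\]
with $a_1=u_1^k$ and $a_0=\partial_x u_2\cdot\sum_{j=0}^{k-1}u_1^{\,j}u_2^{\,k-1-j}$. Since $u_1,u_2\in C([0,1]:H^3(\R))$ and $H^3\hookrightarrow W^{1,\infty}$, the coefficients $a_1,a_0$ are bounded on the strip. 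The claim thus becomes: a solution of this linear equation with $w(\cdot,0),w(\cdot,1)\in H^1(\E^{ax_+^{3/2}}dx)$ for every $a>0$ must vanish identically.

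Second, I would establish persistence of the decay. Starting from the weighted bounds at the two endpoints, weighted energy estimates for the Airy operator $\partial_t+\partial_x^3$ (truncating the weight to avoid infinities, differentiating $\int \E^{2\lambda\varphi}w^2\,dx$ in time and integrating by parts, the dominant commutator $[\partial_x^3,\E^{\lambda\varphi}]$ being essentially sign-definite for a convex weight) combined with a log-convexity argument in $t$ yield that $\E^{ax_+^{3/2}}w\in L^\infty_t L^2_x(\R\times[0,1])$ for every $a>0$, together with the matching control of $\partial_x w$. This turns endpoint decay into quantitative decay on the whole strip. Third comes the heart of the matter: a Carleman estimate adapted to $\partial_t+\partial_x^3$. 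The correct weight grows like $x^{3/2}$, the exponent $3/2$ being critical for a third-order operator just as $x^2$ is for the second-order Schrödinger flow; concretely I would prove
\[
\lambda\,\big\|\E^{\lambda\,\varphi}\,g\big\|_{L^2(\R\times[0,1])}\le C\,\big\|\E^{\lambda\,\varphi}\,(\partial_t+\partial_x^3)\,g\big\|_{L^2(\R\times[0,1])}
\]
for $\lambda$ large and $g$ of compact $x$-support, with $\varphi$ an $x$-convex weight of the above growth. Applying this to $g=\theta_R w$ for smooth spatial cut-offs $\theta_R$, the large parameter $\lambda$ absorbs the lower-order terms $a_1\partial_x w+a_0 w$ into the left-hand side, while the persistence bound controls the cut-off/commutator errors as $R\to\infty$; the outcome is a lower bound $\|w\|_{L^2(A_R\times[1/4,3/4])}\gtrsim \E^{-cR^{3/2}}$ on annuli $A_R=\{R\le x\le 2R\}$ for any nontrivial $w$.

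Finally, the contradiction: persistence together with the hypothesis gives the competing upper bound $\|w\|_{L^2(A_R\times[0,1])}\le C\,\E^{-aR^{3/2}}$ for \emph{every} $a>0$. Choosing $a>c$ and letting $R\to\infty$ makes the lower bound fail, forcing $w\equiv 0$ on the strip and hence $u_1\equiv u_2$. The main obstacle is the third step: deriving the Carleman inequality with the sharp $x^{3/2}$ weight for the non-self-adjoint third-order operator while simultaneously absorbing the first-order term $a_1\partial_x w$ whose coefficient is merely bounded. This requires splitting $\partial_t+\partial_x^3$ into its symmetric and antisymmetric parts, computing the conjugated operator $\E^{\lambda\varphi}(\partial_t+\partial_x^3)\E^{-\lambda\varphi}$ precisely, and running a Gårding-type positivity argument; closely tied to it is ensuring that the persistence step yields decay strong enough to justify the passage $R\to\infty$ without surviving boundary contributions.
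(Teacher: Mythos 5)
This theorem is not proved in the paper at all: it is quoted, with the attribution (EKPV), from the reference \cite{EKPV} as background motivation, so there is no ``paper's own proof'' to compare against. Your outline is, in essence, a reconstruction of the strategy of the original Escauriaza--Kenig--Ponce--Vega argument: reduce to a linear equation for $w=u_1-u_2$ with bounded coefficients, prove persistence of the weighted decay on the whole time strip, establish Carleman-type lower bounds on annuli $\{R\le x\le 2R\}$ of the form $\exp(-cR^{3/2})$ for nontrivial solutions, and contradict these with the upper bounds coming from the hypothesis holding for every $a>0$. This is genuinely different from the method developed in the present paper, which proves its main results (Theorems \ref{theorem:1} and \ref{theorem:2}) by inverse scattering plus the theory of growth of entire functions (indicator functions and Phragm\'en--Lindel\"of). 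The two routes are not interchangeable here: the scattering method requires the equation to be completely integrable, so among the family $\partial_t u+\partial_x^3u+u^k\partial_x u=0$ it applies only to $k=1$ (KdV, via $r=-1$, $A_0(z)=-4\I z^3$) and, after the mKdV identification, $k=2$; moreover it compares a solution against the \emph{zero} solution, since $w=u_1-u_2$ is not itself a solution of the nonlinear flow --- the paper says explicitly that its results recover the $\E^{-x^{3/2}}$ threshold of Theorem \ref{th001} only ``when one of the two solutions, which we are comparing, is the zero one.'' Your real-variable route is what handles arbitrary $k\in\Z^+$ and two arbitrary solutions; what the paper's route buys instead is a uniform treatment of the whole AKNS hierarchy with a decay threshold read off directly from the dispersion function $A_0$.

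That said, as a proof your text is a plan rather than a complete argument: the two steps you correctly identify as the heart of the matter --- the persistence/log-convexity estimates and the Carleman inequality with the convex weight of growth $x^{3/2}$, including absorption of $a_1\partial_x w$ and control of cut-off errors as $R\to\infty$ --- are precisely the technical content of \cite{EKPV}, and none of it is carried out here. In particular, the Carleman estimate you display (a single $\lambda\|\E^{\lambda\varphi}g\|_{L^2}$ on the left) is not by itself sufficient to produce the quantitative lower bound on annuli; in the original argument the lower bound requires a more delicate choice of weights and uses the smallness of the coefficients for large $|x|$, and the constant $c$ in $\exp(-cR^{3/2})$ must come out independent of $R$. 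So the proposal should be judged as a correct identification of the (literature's) proof scheme with the hard analysis deferred, not as a self-contained proof.
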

They denote  $f\in H^1\Big(\E^{ax_+^{3/2}}dx\Big)$ if $f,\partial_x f\in L^2(\E^{ax_+^{3/2}}dx)$, where $x_+=\max\{0,x\}$.

\begin{theorem}(EKPV)\label{th002}
Let $u_1,u_2\in C([0,1]: H^k(\mathbb{R}^n))$, $k\in\mathbb{Z}^+$, $k>n/2+1$ be strong solutions of the equation 
\begin{equation*}
\I\partial_t u+\Delta u+F(u,u^*)=0
\end{equation*}
in the domain $(x,t)\in\mathbb{R}^n\times[0,1]$, with $F:\mathbb{C}^2\to\mathbb{C}$, $F\in C^k$ and $F(0)=\partial_u F(0)=\partial_{u^*}F(0)=0$.
If there exist $\alpha>2$ and $a>0$ such that
\begin{equation*}
w_0=u_1(\cdot,0)-u_2(\cdot,0),\quad w_1=u_1(\cdot,1)-u_2(\cdot,1)\in H^1(\E^{a|x|^\alpha}dx),
\end{equation*}
then $u_1\equiv u_2.$
\end{theorem}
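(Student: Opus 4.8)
The plan is to linearize the problem and then invoke a Gaussian uncertainty principle for Schr\"odinger equations with a bounded potential. First I would set $w=u_1-u_2$, which solves
\begin{equation*}
\I\partial_t w+\Delta w+\big(F(u_1,u_1^*)-F(u_2,u_2^*)\big)=0.
\end{equation*}
Since $F\in C^k$ with $F(0)=\partial_u F(0)=\partial_{u^*}F(0)=0$, the fundamental theorem of calculus lets me write the nonlinear difference as $V_1(x,t)\,w+V_2(x,t)\,\overline{w}$, where $V_1,V_2$ are built from the first derivatives of $F$ evaluated along the segment joining $u_1$ and $u_2$. Because $k>n/2+1$, the Sobolev embedding $H^k\hookrightarrow L^\infty$ guarantees $u_1,u_2\in L^\infty(\R^n\times[0,1])$, so $V_1,V_2\in L^\infty$. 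Thus $w$ solves a \emph{linear} Schr\"odinger equation with bounded (but non-self-adjoint, due to the $\overline{w}$ term) potential, and the data $w(\cdot,0)=w_0$, $w(\cdot,1)=w_1$ carry the super-Gaussian decay $\E^{a|x|^\alpha}w_i\in H^1$.

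Next I would exploit that $\alpha>2$ to land in the regime where no smallness threshold is needed. For every $\lambda>0$ one has $a|x|^\alpha-\lambda|x|^2\to+\infty$, hence $\E^{\lambda|x|^2}w_0$ and $\E^{\lambda|x|^2}w_1$ lie in $L^2$ for \emph{arbitrarily large} $\lambda$. The core tool I would then apply is the Gaussian version of Hardy's uncertainty principle for the evolution above: there is a constant $N_0$, depending only on $\|V_1\|_\infty+\|V_2\|_\infty$ and on the length of the time interval, such that if $\E^{\lambda|x|^2}w(\cdot,0)$ and $\E^{\lambda|x|^2}w(\cdot,1)$ both belong to $L^2$ for some $\lambda>N_0$, then $w\equiv0$. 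Since the super-Gaussian data allow $\lambda$ to be chosen beyond $N_0$, this immediately forces $w\equiv0$, that is, $u_1\equiv u_2$.

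The heart of the argument, and the step I expect to be the main obstacle, is proving that Gaussian threshold result for the potential equation. I would build it from two estimates. The first is a \emph{logarithmic convexity} (monotonicity) estimate: after an Appell-type conformal change of variables that renders the Gaussian weight symmetric, one writes the generator as $S+A$ with $S$ symmetric and $A$ skew-symmetric, and shows that $H(t)=\|\E^{\gamma(t)|x|^2}w(t)\|_{L^2}^2$ satisfies $\partial_t^2\log H\ge -C$, which propagates the Gaussian decay from $t=0,1$ to all intermediate times with quantitative control. The second is a \emph{lower bound} obtained from a Carleman estimate, showing that a nontrivial solution with bounded potential cannot concentrate too strongly, e.g. $\int_{|x|<R}|w|^2$ cannot decay faster than a fixed Gaussian rate. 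Combining the upper bound from convexity with the lower bound from the Carleman estimate yields a contradiction as soon as the decay rate exceeds the threshold, which is exactly what the $\alpha>2$ hypothesis supplies.

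The main technical difficulty is that the $\overline{w}$ term makes the evolution non-self-adjoint, so the symmetric/skew-symmetric splitting and the convexity computation no longer close automatically. I would handle this by treating the pair $(w,\overline{w})$ as a vector-valued unknown, or more simply by using the pointwise bound $|V_1 w+V_2\overline{w}|\le(\|V_1\|_\infty+\|V_2\|_\infty)|w|$ so that the commutator and energy identities retain the sign structure needed for both the convexity and the Carleman arguments. Justifying the formal convexity computation rigorously---through truncation of the weight, approximation, and a priori $L^2$ control in time---is where the bulk of the work lies.
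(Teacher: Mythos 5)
The paper does not actually prove this statement: Theorem \ref{th002} is quoted as background from \cite{EKPV1}, and the only indication of its proof given in the text is the remark that it rests on ``techniques of real analysis, in particular Carleman estimates and properties of the log-convexity of some special functions.'' Your proposal is precisely that argument, i.e.\ the original Escauriaza--Kenig--Ponce--Vega strategy: set $w=u_1-u_2$, use the fundamental theorem of calculus together with the Sobolev embedding $H^k\hookrightarrow L^\infty$ (available since $k>n/2+1$) to see that $w$ solves a linear Schr\"odinger equation with bounded coefficients $V_1w+V_2\overline{w}$; note that $\alpha>2$ makes $\E^{\lambda|x|^2}w_0,\E^{\lambda|x|^2}w_1\in L^2$ for every $\lambda>0$; and conclude from a Gaussian-threshold Hardy-type uncertainty principle for bounded-potential Schr\"odinger evolutions, itself obtained by combining a logarithmic-convexity estimate for Gaussian-weighted norms with a Carleman-type lower bound. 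This outline is faithful to the source, including your observation that the non-self-adjoint $\overline{w}$ term is harmless because the convexity and Carleman computations only use the pointwise bound $|V_1w+V_2\overline{w}|\le C|w|$; the two estimates you defer (the convexity inequality and the lower bound) are exactly the technical substance of \cite{EKPV1}, so your sketch is a correct roadmap rather than a complete proof. It is worth stressing that this route is genuinely different from the machinery developed in the present paper, which replaces Carleman/convexity arguments by the inverse scattering transform and growth estimates for analytic functions (indicator functions, Phragm\'en--Lindel\"of); that method yields the $\E^{-x^{2}}$ threshold for the one-dimensional cubic NLS, but only in the integrable setting and only when one of the two solutions being compared is zero, whereas the approach you outline handles general nonlinearities $F$, arbitrary dimension $n$, and the comparison of two nontrivial solutions.
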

Theorem \ref{th001} corresponds to the generalized Korteweg-de Vries equation and theorem \ref{th002} corresponds to the nonlinear Schr\"odinger equation.

Combining complex analysis and scattering theory we obtain another approach to prove  uncertainty principles. This  was shown  for instance in \cite{JLMP} for the discrete Schr\"odinger equation. They also proved something a little stronger 
\begin{theorem}(JLMP)\label{thJLMP}
Let $u(t,n)$, $t>0$, $n\in\mathbb{Z}$ be a strong solution of 
\begin{equation*}
\partial_tu=\I(\Delta_du+Vu),
\end{equation*}
where $\Delta_d$ is the discrete Laplacian: $\Delta_df(n):=f(n+1)+f(n-1)-2f(n)$. Assume  that the potential $V$ does not depend on time and also $V(n)\neq 0$ just for a finite numbers of $n$'s. If, for some $\epsilon>0$,
\begin{equation*}
|u(t,n)|\leq C\Big(\frac{\E}{(2+\epsilon)n}\Big)^n,\quad n>0, t\in\{0,1\},
\end{equation*}
then $u\equiv0$.
\end{theorem}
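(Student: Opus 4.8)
The plan is to transfer the problem to the Fourier side, translate the super-exponential spatial decay into a sharp bound on the exponential type of an associated entire function, and then run a Phragmén--Lindelöf argument against the explicit (Bessel) propagator, using scattering theory to absorb the finitely supported potential. First I would pass to generating functions. For $t\in\{0,1\}$ put $F_t(z)=\sum_{n>0}u(t,n)z^n$. By Stirling $(\E/((2+\epsilon)n))^n\sim (2+\epsilon)^{-n}/n!$ up to polynomial factors, so the hypothesis $|u(t,n)|\le C(\E/((2+\epsilon)n))^n$ forces $F_t$ to be entire of order $1$ and exponential type at most $\tau:=1/(2+\epsilon)$. Equivalently, $\hat u(t,\theta)=\sum_n u(t,n)\E^{\I n\theta}$, initially defined for $\theta$ on the torus, continues analytically into the lower half-plane with the growth $|\hat u(t,\theta)|\lesssim \exp(\tau\,\E^{|\im\theta|})$ coming from the positive frequencies. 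The crucial numerology is that $\tau<1/2$, hence $2\tau<1$.

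Next I would make the evolution explicit. Writing $U(t,z)=\sum_{n\in\Z}u(t,n)z^n$, the free discrete Laplacian acts as multiplication by $p(z)=z+z^{-1}-2$, so the free time-one map multiplies $U$ by $\E^{\I(z+z^{-1}-2)}=\E^{-2\I}\sum_k\I^k J_k(2)z^k$; that is, the free kernel is $\E^{-2\I}\I^{\,n-m}J_{n-m}(2)$, and the asymptotics $J_k(2)\sim 1/k!$ show that this multiplier has exponential type exactly $1$ in $z$. Because $V$ is time-independent and supported on finitely many sites, $u$ solves the free equation outside a bounded set and the potential enters only through the scattering (Jost) data; for a finitely supported potential the reflection and transmission coefficients are rational functions of $z=\E^{\I\theta}$. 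Hence the exact relation between $U(0,\cdot)$ and $U(1,\cdot)$ is the type-$1$ multiplier $\E^{\I(z+z^{-1}-2)}$ composed with rational factors carrying only finitely many zeros and poles (the bound states and resonances), which therefore do not alter the exponential type.

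The endgame is a comparison of indicators. In the upper half $z$-plane $|\E^{\I z}|=\E^{-\im z}$ decays while $|\E^{\I/z}|\to1$, so there the multiplier is small of type $1$; combining this with the exact free relation $U(1,\cdot)=\E^{\I(z+z^{-1}-2)}U(0,\cdot)$ (up to the rational corrections) and the type bounds on the positive-frequency parts $F_0,F_1$ gives, after controlling the indicator function along rays, an inequality that a nonzero entire function of type $\tau$ cannot satisfy once $2\tau<1$. A Phragmén--Lindelöf/Liouville argument then forces $F_0\equiv0$ on an arc, hence $u(0,n)\equiv0$ for $n>0$; feeding this back through the finitely supported recurrence and the backward uniqueness of the evolution yields $u\equiv0$.

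I expect the main obstacle to be that the decay is only one-sided ($n>0$): $U(\cdot,z)$ is a genuine two-sided Laurent object with an essential singularity at $z=0$ (reflected in the $\E^{\I/z}$ factor), so it cannot be treated as a single entire function. The remedy is to localize the Phragmén--Lindelöf analysis to the half-plane where $\E^{\I z}$ dominates and where the one-sided data is exactly what is needed, while matching the precise asymptotics at both ends $z\to0$ and $z\to\infty$ through the Jost solutions. Making the indicator comparison rigorous in the presence of the rational scattering factors, and tracking their poles at bound states and resonances, is the delicate point.
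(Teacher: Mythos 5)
You should first be aware that the paper does not prove Theorem \ref{thJLMP} at all: it is quoted from \cite{JLMP} as motivation, and the paper's own proofs (Theorem \ref{theorem:1}, Lemma \ref{Lemma:1} and their KdV analogues) concern the continuous AKNS system. So your proposal can only be measured against the method the paper attributes to \cite{JLMP} and then adopts itself, namely scattering theory for the stationary operator combined with growth estimates for analytically continued data (indicator functions, Phragm\'en--Lindel\"of). Your skeleton is exactly that method, and its quantitative core is sound: by Stirling the hypothesis gives coefficients $O\big((2+\epsilon)^{-n}/n!\big)$, hence one-sided generating functions of exponential type $\tau=1/(2+\epsilon)<1/2$; the free time-one map is the Bessel multiplier $\E^{\I(z+z^{-1}-2)}$, of type $1$ along rays in the relevant half-plane; and an indicator comparison using the paper's inequality \eqref{Eq:4} (with $\rho=1$, so $h(\varphi)+h(\varphi+\pi)\geq 0$) yields $|\sin\varphi|\leq 2\tau<1$, a contradiction at $\varphi=-\pi/2$. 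For $V\equiv 0$ this closes, provided you run the argument on the full Laurent series $U(t,z)=\sum_{n\in\Z}u(t,n)z^n$, analytic in $|z|>1$ (entire positive part plus bounded negative part), so that the contradiction kills \emph{all} coefficients. Your version, which concludes only $F_0\equiv 0$, i.e.\ $u(0,n)=0$ for $n>0$, and then appeals to ``the recurrence,'' does not close: vanishing on a half-line at a single time does not propagate to $n\leq 0$ through the evolution equation.

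The genuine gap is the treatment of the potential. For $V\neq 0$ the time-one map is not a Fourier multiplier, and the discrete spectrum of $\Delta_d+V$ enters \emph{additively}, not multiplicatively: $u(t)=\sum_j \E^{\I E_j t}c_j\psi_j+u_{\mathrm{ac}}(t)$, where each bound state $\psi_j$ decays exactly geometrically ($\psi_j(n)=a_jz_j^n$ with $|z_j|<1$ outside the support of $V$, since there it solves the free difference equation). These contributions are invisible to your step ``rational factors carrying only finitely many zeros and poles \dots\ do not alter the exponential type'': they are not factors at all, and a nonzero bound-state component destroys the entire-function framework (the positive-frequency generating function acquires a genuine pole at $z=1/z_j$). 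Your multiplier/indicator comparison therefore constrains only the absolutely continuous part, and a separate argument is needed to show that bound-state components cannot be present under super-geometric decay at two times --- in particular, that they cannot cancel against $u_{\mathrm{ac}}$ at both $t=0$ and $t=1$. This is not a technicality: it is precisely the analogue of the paper's Lemma \ref{Lemma:1}, where, after the reflection data is shown to vanish, the solution is a pure soliton given by a determinant formula, and the inequality \eqref{EqLemma:9} (a lower bound by the slowest decaying exponential, $C\E^{-x_n\im(\alpha_{k_0})}/(1+D)$, set against the assumed decay) forces the number of bound states to be zero. Your proposal has no counterpart of this step, so as written it proves the theorem only in the absence of point spectrum.
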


In case that the potential $V(n)$ depends on $t$,  one has to use techniques of real analysis, as it was done in \cite{CEKPV}. This was also proved in \cite{JLMP}.

Using a similar method of combining the scattering theory and complex analysis, I. Alvarez-Romero and G. Teschl proved in \cite{ART1}
the extension of the theorem  \ref{thJLMP} to general Jacobi operators with exponential decay rate. As a natural continuation, in \cite{ART}, the authors applied a similar technique to a nonlinear problem:  the Toda hierarchy. Here they proved that if a solution for the Toda hierarchy  has a strong decay it must be trivial.  The pattern here is to show that the reflection coeffcient of the scattering data must vanish and then one proves that there is no pure N-soliton solution for the Toda system with such strong decay condition.

Following this line of work, in this paper we prove an uncertainty principle in its dynamic form for partial differential equations which are nonlinear.  Classical analysis for nonlinear partial differential  equations uses harmonic analysis and the dispersion relation of the propagation waves. Actually, using numerical methods, one can obtain a solution in terms of the initial data, but in general little more can be said about the exact solution. Thus, in order to work in a general setting, and not only the Korteweg-de Vries or nonlinear Schr\"odinger equation, we will need another point of view to study how the solutions of some partial differential equations behave.   We will need to use the inverse scattering transform on these equations. As it is well known, in 1967 in \cite{GGKM}, C. S. Gardner, J. M. Greene, M. D. Kruskal and  R. M. Miura discovered the method to apply the inverse scattering transform to the Korteweg-de Vries equation
\begin{equation*}
u_t+uu_x+u_{xxx}=0
\end{equation*}
when the initial data $u(0,x)$ is given and $u(0,x)\to 0$ sufficiently rapidly as $|x|\to\infty$.  And later, in 1972, V. F. Zakharov and A. B. Shabat in \cite{ZS}
used the scattering problem
\begin{equation*}
\begin{split}
&v_{1x}+\I\lambda v_1=q(x,t)v_2\\
&v_{2x}-\I\lambda v_2=r(x,t)v_1, \quad -\infty<x<\infty
\end{split}
\end{equation*}
with $r=-q^*$, where '$*$' denotes the complex conjugation, to find the solution for the nonlinear Schr\"odinger equation:
\begin{equation*}
q_t-\I q_{xx}-2\I q^2q^*=0.
\end{equation*}

This second  example of the use of the inverse scattering transform was strong evidence that the method was not fortuitous. Thus, in  1974, M. J. Ablowitz, D. J. Kaup,  A. C. Newell and   H. Segur in \cite{AKNS} gave a detailed description of a wider setting of partial differential equations where this method could be applied. They considered the coupled equations
\begin{equation}\label{NLSq:1}
\begin{pmatrix}r_t\\ -q_t\end{pmatrix}+2A_0(L^+)\begin{pmatrix}r\\ q\end{pmatrix}=0,
\end{equation}
where $L^+$  is the integro-differential operator
\begin{equation*}
L^+=\frac{1}{2\I}\begin{pmatrix}\partial_x-2r\int_{-\infty}^xdyq& 2r\int_{-\infty}^xdyr\\ 
-2q\int_{-\infty}^xdyq& -\partial_x+2q\int_{-\infty}^xdyr\end{pmatrix}
\end{equation*}
and $A_0(z)$ is an arbitrary ratio of entire functions, which is directly related with the dispersion relation of the linearized version of \eqref{NLSq:1}. In fact, as we will see in some examples, this function $A_0(z)$ is of the order of the dispersion relation. As it happened in the Korteweg-de Vires equation, the potentials $(r,q)$ also fulfill some decay condition as $|x|\to \infty$.

Thus, in the present paper we obtain a general uncertainty principle for a general partial differential equation of the form \eqref{NLSq:1}. And as it happened in \cite{ART1,ART,JLMP}, we will combine the scattering theory from \cite{AKNS}
and complex analysis, in particular the study of the growth of entire functions.
As particular cases, we will observe that for the special cases of the Korteweg-de Vries equation we will obtain the same bound on the decay condition, that is $\E^{-x^{3/2}}$, as in theorem \ref{th001} and the same will happen for the nonlinear Schr\"odinger equation in theorem \ref{th002}, that is, the decay is of the form $\E^{-x^2}$. This happens when one of the two solutions, which we are comparing, is the zero one.

\section{Preliminaries}

\subsection{Growth of entire functions.} We will give some results, which can be found all of them in \cite{Levin} in lectures 1 and 8, about the asymptotic behaviour of entire functions.

 Let $f(z)$ be an entire function and $M_f(r)=\max_{|z|=r}|f(z)|$, we say that $f(z)$ is of {\em finite order} if for some $k\geq 0$ and $r$ big enough,
\begin{equation}\label{Eq:1}
M_f(r)<\exp(r^k)
\end{equation}
 and we say that $f$ has {\em finite type}, $\sigma_f$, if for $|z|$ big enough and some $\sigma>0$ we have 
\begin{equation*}
|f(z)|<\exp(\sigma|z|^\rho),
\end{equation*}
where  the {\em order } of $f$, $\rho$, will be the greatest lower bound of the $k$ which fulfills \eqref{Eq:1}, similarly, we define the  {\em type} of $f$, $\sigma_f$, with respect to the order $\rho$ . This notion of $\sigma_f$ gives us an idea of how fast the function $f(z)$ grows as $|z|\to\infty$. However, it may happen that $f(z)$ grows faster in one direction than another, thus we define the {\em indicator function} of $f$ with respect to the order $\rho$ by
\begin{equation*}
h_f(\varphi)=\limsup_{r\to\infty}\frac{\log|f(r\E^{\I \varphi})|}{r^\rho},\quad \varphi\in[0,2\pi].
\end{equation*}
It follows from the definition
\begin{equation}\label{Eq:3}
\begin{split}
h_{fg}(\varphi)&\leq h_f(\varphi)+h_g(\varphi)\\
h_{f+g}(\varphi)&\leq \max\{h_f(\varphi),h_g(\varphi)\},
\end{split}
\end{equation}
where $f,g$ are two entire functions. Moreover  if $f$ is a function which is analytic inside an angle $D=\{z=r\E^{\I \varphi}: \alpha<\varphi<\beta\}$ and we have the asymptotic inequality as $r\to\infty$: $M_f(r)<\exp(Ar^\rho)$, then
\begin{equation}\label{Eq:4}
h_f(\varphi)+h_f(\varphi+\pi/\rho)\geq 0, \quad \alpha\leq\varphi<\varphi+\pi/\rho\leq\beta.
\end{equation}
Since the definitions of  order, type and indicator function of $f$ deal with the asymptotic behaviour of $f$, they can be extended to functions  which are analytic in a region $\{z:|z|>c>0\}$. In addition due to the fact that  the key ingredient to prove \eqref{Eq:4} is the Phragm\'en-Lindel\"of theorem, it is easy to adapt the proof to this wider set of  functions.

\subsection{Inverse scattering transform}
 
Here we show a brief description of how the inverse scattering transform can be applied to a wider set of partial differential equations, besides the Korteweg-de Vries and nonlinear Schr\"odinger equations, in order  to solve them. A detailed description can be found in  \cite{AKNS}. We also name some results of the Schr\"odinger equation on the real line, which can be found in \cite{DT}. We repeat them here in order to make our exposition self-contained.

Consider the coupled equations
\begin{equation}\label{NLSEq:1}
\begin{pmatrix}r_t\\ -q_t\end{pmatrix}+2A_0(L^+)\begin{pmatrix}r\\ q\end{pmatrix}=0,
\end{equation}
where $L^+$  is the integro-differential operator
\begin{equation*}
L^+=\frac{1}{2\I}\begin{pmatrix}\partial_x-2r\int_{-\infty}^xdyq& 2r\int_{-\infty}^xdyr\\ 
-2q\int_{-\infty}^xdyq& -\partial_x+2q\int_{-\infty}^xdyr\end{pmatrix}
\end{equation*}
and $A_0(z)$ is an arbitrary ratio of entire functions, which is directly related with the dispersion relation of the linearized version of \eqref{NLSEq:1}. For instance if we set $r=-q^*$, where '$^*$' denotes the complex conjugated, and $A_0(z)=-2\I z^2$, \eqref{NLSEq:1} becomes the nonlinear Schr\"odinger equation.

In order to solve \eqref{NLSEq:1}, we want to apply the inverse scattering transform, thus we need the associated eigenvalue problem:
\begin{equation}\label{NLSEq:2}
\begin{split}
v_{1x}+\I\lambda v_1&=q(x,t)v_2\\
v_{2x}-\I\lambda v_2&=r(x,t)v_1,\quad\quad -\infty<x<\infty.
\end{split}
\end{equation}
Here $\lambda $ will play the role of the eigenvalue and the potentials $q,r$ fulfill the following evolution equation:
\begin{equation}\label{NLSEq:3}
\begin{split}
v_{1t}&=A(x,t,\lambda)v_1+B(x,t,\lambda)v_2\\
v_{2t}&=C(x,t,\lambda)v_1+D(x,t,\lambda).
\end{split}
\end{equation}
From cross differentiation of \eqref{NLSEq:2} and \eqref{NLSEq:3}, we obtain $D=-A+d(t)$ and without lost of generality, we assume that $d(t)=0$. We notice that the eigenvalue $\lambda$ is assumed to be independent of time. Thus
\begin{equation}\label{NLSEq:4}
\begin{split}
A_x&=qC-rB\\
B_x+2\I\lambda B&=q_t-2Aq\\
C_x-2\I\lambda C&=r_t+2Ar.
\end{split}
\end{equation}

We will study two different cases:
\begin{itemize}
\item[i)] $q,r\to 0$, as $|x|\to\infty$. The most known example is the nonlinear Schr\"odinger equation (NLS case).
\item[ii)]$r=-1$, which will transform \eqref{NLSEq:2} into the Schr\"odinger equation and as a main example we will have the Korteweg-de Vries equation. A detailed description is given in subsection \ref{SecKdV} (KdV case).
\end{itemize}

\subsubsection{NLS case}
As we have pointed out, we assume that the potentials $q,r$ have some decay as $|x|\to\infty$, i.e. $|r|,|q|\to0$ as $|x|\to\infty$. This decay, from now on, will be 
\begin{equation*}
\begin{split}
&\int_{\mathbb{R}}(1+|x|)|q(x,t)|dx<\infty\\
&\int_{\mathbb{R}}(1+|x|)|r(x,t)|dx<\infty.
\end{split}
\end{equation*}
Thus, we can define the Jost solutions, which (for $\lambda\in\mathbb{R}$) have the following asymptotic values:
\begin{equation}\label{NLSEq:5}
\begin{split}
&\phi\to\begin{pmatrix}1\\0\end{pmatrix}\E^{-\I\lambda x}\quad\quad x\to-\infty\\
&\overline{\phi}\to\begin{pmatrix}0\\-1\end{pmatrix}\E^{\I\lambda x}\quad\quad x\to-\infty\\
&\psi\to\begin{pmatrix}0\\1\end{pmatrix}\E^{\I\lambda x}\quad\quad x\to\infty\\
&\overline{\psi}\to\begin{pmatrix}1\\0\end{pmatrix}\E^{-\I\lambda x}\quad\quad x\to\infty.
\end{split}
\end{equation}
The scattering data, $a(\lambda,t),b(\lambda,t),\overline{a}(\lambda,t)$ and $\overline{b}(\lambda,t)$, appears naturally to relate these solutions:
\begin{equation}\label{NLSEq:6}
\begin{split}
&\phi=a\overline{\psi}+b\psi\to\begin{pmatrix}a\E^{-\I\lambda x}\\ b\E^{\I \lambda x}\end{pmatrix} \quad \quad\text{as }  x\to\infty\\
&\overline{\phi}=\overline{b}\text{ } \overline{\psi}-\overline{a}\psi\to\begin{pmatrix}\overline{b}\E^{-\I \lambda x}\\-\overline{a}\E^{\I\lambda x}\end{pmatrix}\quad\quad\text{as }x\to\infty.
\end{split}
\end{equation}
The coefficients $a(\lambda,t),b(\lambda,t),\overline{a}(\lambda,t)$ and $\overline{b}(\lambda,t)$ are given by the Wronskian of the Jost solutions:
\begin{equation}\label{NLSEq:7}
\begin{split}
&a=W(\phi,\psi)\\
&b=-W(\phi,\overline{\psi})\\
&\overline{a}=W(\overline{\phi},\overline{\psi })\\
&\overline{b}=W(\overline{\phi},\psi),
\end{split}
\end{equation}
where the Wronskian is defined by $W(u,v)=u_1v_2-u_2v_1$. Moreover, $a(\lambda,t)$ can be analytically extended into the upper half plane, $\im(\lambda)>0$ and, respectively, $\overline{a}(\lambda,t)$ into the lower half plane, $\im(\lambda)<0$. In addition, the discrete eigenvalues, also called {\em bound states}, $\{\lambda_k\}_{k=1}^N$ of \eqref{NLSEq:2} are given by the zeros of $a(\lambda,t)$, at which $\phi(\lambda_k,t)=b_k(t)\psi(\lambda_k,t)$,  and similarly the zeros of $\overline{a}(\lambda,t)$ in the lower half plane are also eigenvalues and $\overline{\phi}(\overline{\lambda}_k,t)=\overline{b}_k(t)\overline{\psi}_k(\overline{\lambda}_k,t)$. In general, $\lambda_k,\overline{\lambda}_k$ are not related. We are going to assume that they are a finite set and for simplicity that $\im(\lambda_k)>0$ and $\im(\overline{\lambda}_k)<0$.

Because of the choice of the normalization in \eqref{NLSEq:5}, one can assume without lost of generality that $B,C$ tend to zero as $x\to-\infty$, and also from \eqref{NLSEq:4} that $A(x,t,\lambda)$ tends to a constant, i.e.
\begin{equation*}
\lim_{x\to-\infty}A(x,t,\lambda)=A_-(\lambda),
\end{equation*}
where $A_-(\lambda)$ is an arbitrary function of $\lambda$. In addition, using \eqref{NLSEq:2} and \eqref{NLSEq:3} we observe that
\begin{equation}\label{NLSEq:9}
\begin{split}
\phi_t&=\begin{pmatrix}A-A_-& B\\ C& -A-A_-\end{pmatrix}\phi\\
\overline{\phi}_t&=\begin{pmatrix}A+A_-& B\\ C& -A+A_-\end{pmatrix}\overline{\phi}.
\end{split}
\end{equation}
Hence, using \eqref{NLSEq:6} and \eqref{NLSEq:9} we obtain the time evolution for the scattering data:
\begin{equation}\label{NLSEq:10}
\begin{split}
a_t&=(A_+-A_-)a+B_+b\\
b_t&=C_+a-(A_++A_-)b\\
\overline{a}_t&=-(A_+-A_-)\overline{a}-C_+\overline{b}\\
\overline{b}_t&=-B_+\overline{a}+(A_++A_-)\overline{b},
\end{split}
\end{equation}
where
\begin{equation*}
A_+=\lim_{x\to\infty}A,\quad B_+=\lim_{x\to\infty}B\E^{2\I\lambda x},\quad\text{and }\quad C_+=C\E^{-2\I\lambda x}.
\end{equation*}

In what follows we will assume that $A_+=A_-$ and $B_+=C_+=0$. Thus, the evolution equations for the scattering data turn into
\begin{equation}\label{NLSEq:11}
\begin{split}
a(\lambda,t)&=a(\lambda,0)\\
b(\lambda,t)&=b(\lambda,0)\E^{-2A_-(\lambda)t}\\
\overline{a}(\lambda,t)&=\overline{a}(\lambda,0)\\
\overline{b}(\lambda,t)&=\overline{b}(\lambda,0)\E^{2A_-(\lambda)t}
\end{split}
\end{equation}
and the function  $A_0(\lambda)$ in\eqref{NLSEq:1} is equal to $A_-(\lambda)$, i.e.,  $A_0(\lambda)=A_-(\lambda)$.

Notice that from \eqref{NLSEq:11}, it follows that the eigenvalues of \eqref{NLSEq:2} are the same for all $t$.

The Jost solutions admit a  representation 
\begin{equation}\label{NLSEq:001}
\begin{split}
\psi(\lambda,x)&=\begin{pmatrix}0\\1\end{pmatrix}\E^{\I\lambda x}+\int_{x}^\infty K(x,s)\E^{\I\lambda s}ds\\
\overline{\psi}(x)&=\begin{pmatrix}1\\0\end{pmatrix}\E^{-\I\lambda x}+\int_{x}^\infty \overline{K}(x,s)\E^{-\I\lambda s}ds\\
\phi(\lambda,x)&=\begin{pmatrix}1\\0\end{pmatrix}\E^{-\I\lambda x}-\int^{x}_{-\infty} L(x,s)\E^{-\I\lambda s}ds\\
\overline{\phi}(x)&=-\begin{pmatrix}0\\1\end{pmatrix}\E^{\I\lambda x}-\int^{x}_{-\infty} \overline{L}(x,s)\E^{\I\lambda s}ds,\\
\end{split}
\end{equation}
where  the integral kernels $K,\overline{K}, L, \overline{L}$  exist and are unique. This follows using the expression \eqref{NLSEq:001} into \eqref{NLSEq:2} 
\begin{equation*}
\begin{split}
&(\partial_x-\partial_s)K_1(x,s)-q(x)K_2(x,s)=0\\
&(\partial_x+\partial_s)K_2(x,s)-r(x)K_1(x,s)=0,
\end{split}
\end{equation*}
together with the boundary conditions $K_1(x,x)=-1/2q(x)$ and $\lim_{s\to\infty}K(x,s)=0$. Similar equations can be obtained for the other Kernels.

We are now in conditions to describe the Marchenko type equations which allow us to recover the potentials via the scattering data:
\begin{equation}\label{NLSEq:12}
\begin{split}
&\overline{K}(x,y)+\begin{pmatrix}0\\1\end{pmatrix}F(x+y)+\int_x^\infty K(x,s)F(s+y)ds=0,\quad (y>x)\\
&K(x,y)-\begin{pmatrix}1\\0\end{pmatrix}\overline{F}(x+y)-\int_x^\infty \overline{K}(x,s)\overline{F}(s+y)ds=0,\quad (y>x)\\
&\overline{L}(x,y)+\begin{pmatrix}1\\0\end{pmatrix}G(x+y)+\int_{-\infty}^x L(x,s)G(s+y)ds=0,\quad (x>y)\\
&L(x,y)+\begin{pmatrix}0\\1\end{pmatrix}\overline{G}(x+y)+\int_{-\infty}^x \overline{L}(x,s)\overline{G}(s+y)ds=0,\quad (x>y),
\end{split}
\end{equation}
where
\begin{equation*}
\begin{split}
F(z)&=\frac{1}{2\pi}\int_\Gamma \frac{b(\lambda)}{a(\lambda)}\E^{\I\lambda z}d\lambda,\\
\overline{F}(z)&=\frac{1}{2\pi}\int_{\overline{\Gamma}}\frac{\overline{b}(\lambda)}{\overline{a}(\lambda)}\E^{-\I\lambda z}d\lambda,\\
G(z)&=\frac{1}{2\pi}\int_\Gamma \frac{\overline{b}(\lambda)}{a(\lambda)}\E^{-\I\lambda z}d\lambda,\\
\overline{G}(z)&=\frac{1}{2\pi}\int_{\overline{\Gamma}}\frac{b(\lambda)}{\overline{a}(\lambda)}\E^{\I\lambda z}d\lambda.
\end{split}
\end{equation*}
Here $\Gamma$ denotes the contour in the complex $\lambda-$plane, starting from $\lambda=-\infty+\I0^+$, passing over all zeros of $a(\lambda)$, and ending at $\lambda=+\infty+\I0^+$. Similarly, $\overline{\Gamma}$ is the contour which starts from $\lambda=-\infty+\I0^-$, passes over all zeros of $\overline{a}(\lambda)$ and ends at $\lambda=+\infty+\I0^-$. Thus, we can write the potentials $q,r$ in terms of the integral Kernels in \eqref{NLSEq:001}
\begin{equation}\label{NLSEq:14}
\begin{split}
&K_1(x,x)=-\overline{L}_1(x,x)=-\frac{1}{2}q(x),\\
&K_2(x,x)=\overline{K}_1(x,x)=\frac{1}{2}\int_x^\infty q(y)r(y)dy,\\
&L_1(x,x)=-\overline{L}_2(x,x)=\frac{1}{2}\int_{-\infty}^x q(y)r(y)dy,\\
&L_2(x,x)=\overline{K}_2(x,x)=\frac{1}{2}r(x).
\end{split}
\end{equation}

Finally, since in general, it may happen that the equations \eqref{NLSEq:12} have more than one solution when the potentials $q,r$ evolve through time, we will assume that there always exists a solution for \eqref{NLSEq:12} and that this solution is unique. This behaviour can occur when $A_0(\lambda), q(x,0),r(x,0)$ are unrestricted. For instance, if we are working with the Korteweg-de Vries equation, then this cannot happen because the constraint on the initial data $\int_\mathbb{R}(1+|x|)|u|dx<\infty$ remains valid for any future time.

To finish this subsubsection, we want to remark that the Jost solutions can be written differently from \eqref{NLSEq:001}. These expressions (see below) together with \eqref{NLSEq:7} allow us to study if the scattering coefficients $a(\lambda),\overline{a}(\lambda),b(\lambda),\overline{b}(\lambda)$ can be extended outside the real line or if they are analytic functions in some region of the complex plane:
\begin{equation}\label{NLSEq:15}
\begin{split}
\phi_1(x)&=\E^{-\I\lambda x}+\E^{-\I\lambda x}\int_{-\infty}^x M(\lambda,x,y)\E^{\I\lambda y}\phi_1(y)dy\\ 
\phi_2(x)&=\E^{\I\lambda x}\int_{-\infty}^x\E^{-\I \lambda y}r(y)\phi_1(y)dy\\
\overline{\psi}_1(x)&=\E^{-\I\lambda x}+\E^{-\I\lambda x}\int_x^\infty\overline{M}(\lambda,x,y)\E^{\I\lambda y}\overline{\psi}(y)dy\\ 
\overline{\psi}_2(x)&=-\E^{\I\lambda x}\int_x^{\infty}r(y)\E^{-\I\lambda y}\overline{\psi}_1(y)dy\\
\psi_1(x)&=\E^{-\I\lambda x}\int_x^{\infty} \Big(\E^{\I\lambda y}\psi_1(y)\overline{M}(\lambda,x,y)-q(y)\E^{2\I\lambda y}\Big)dy\\
\psi_2(x)&=-\E^{\I\lambda x}\Big(-1+\int_x^\infty r(y)\psi_1(y)\E^{-\I\lambda y}dy\Big)\\
\overline{\phi}_1(x)&=\E^{-\I\lambda x}\int_{-\infty}^x\E^{\I\lambda y}q(y)\overline{\phi}_2(y)dy\\
\overline{\phi}_2(x)&=\E^{\I\lambda x}\Big(-1+\int_{-\infty}^xM^*(\lambda,x,y)\overline{\phi}_2(x)\E^{-\I\lambda y}dy\Big),
\end{split}
\end{equation}
where 
\begin{equation*}
\begin{split}
&M(\lambda,x,y)=r(y)\int_y^x\E^{2\I\lambda(z-y)}q(z)dz,\quad \overline{M}(\lambda,x,y)=r(y)\int_{x}^y\E^{2\I\lambda(z-y)}q(z)dz \\
&\text{and }\quad M^*(\lambda,x,y)=q(y)\int_y^x\E^{2\I\lambda(y-z)}r(z)dz.
\end{split}
\end{equation*}
To summarize, these are the main facts of the {\em NLS case}:
\begin{itemize}
\item[i)]$\int_{\mathbb{R}}(1+|x|)|q(x,t)|dx, \int_{\mathbb{R}}(1+|x|)|r(x,t)|dx<\infty$.
\item[ii)]$A_0(\lambda)=A_-(\lambda)=A_+(\lambda)$, $B_-=B+=C_-=C_+=0.$
\item[iii)]The system of equations \eqref{NLSEq:12} has a unique solution for all time $t$.
\item[iv)]The zeros (bound states) of $a(\lambda)$, $\{\lambda_k\}_{k\geq 1}^N$, and $\overline{a}(\lambda)$, $\{\overline{\lambda}_k\}_{k\geq1}^{\overline{N}}$, are finite, independent of time and $\im(\lambda_k)>0$, $\im(\overline{\lambda}_k)<0$.  In general, $N\neq\overline{N}$ and $\lambda_k$ and $\overline{\lambda}_k$ are not related.
\end{itemize}
As we pointed out in the introduction if we set $r=-q^*$ and $A_0(z)=-2\I z^2$, then we obtain the nonlinear Schr\"odinger equation and if  we set $r=q$ and $A_0(z)=-4\I z^3$, then we have the modified Korteweg-de Vries equation. Moreover, if we consider the linear part of these two equations, we observe that $w(k)=-\I k^2$ for the nonlinear Schr\"odinger equation and $w(k)=-\I k^3$ for the modified Korteweg-de Vires equation, where $w(k)$ denotes the dispersion relation. We remark how close these numbers $w(k)$ and $A_0(z)$ are.

\subsubsection{KdV case}\label{SecKdV}
We will give a brief description of the special case when $r=-1$. We will follow the Appendix 3 in \cite{AKNS} and \cite{DT} for the study of the Schr\"odinger equation.

Since $r=-1$,  \eqref{NLSEq:1} is equivalent to the Schr\"odinger equation
\begin{equation}\label{KdVEq:0}
v_{2xx}+(\lambda^2+q)v_2=0.
\end{equation}
Here the appropriate eigenfunctions are $\phi,\overline{\phi}$, which have the following asymptotic behaviour
\begin{equation*}
\begin{split}
&\phi\to\begin{pmatrix}2\I\lambda\\1\end{pmatrix}\E^{-\I\lambda x},\\
&\overline{\phi}\to\begin{pmatrix}0\\1\end{pmatrix}\E^{\I\lambda x},\quad\text{as }x\to-\infty.
\end{split}
\end{equation*}
Thus, the scattering data appears if we let $x\to\infty$:
\begin{equation*}
\begin{split}
&\phi\to\begin{pmatrix}2\I\lambda a\E^{-\I\lambda x}\\a\E^{-\I\lambda x}+b\E^{\I\lambda x}\end{pmatrix},\\
&\overline{\phi}\to\begin{pmatrix}2\I\lambda\overline{b}\E^{-\I\lambda x}\\ \overline{a}\E^{\I\lambda x}+\overline{b}\E^{-\I\lambda x}\end{pmatrix},\quad \text{as }x\to\infty.
\end{split}
\end{equation*}
As we did in \eqref{NLSEq:9}, we observe that the functions $\phi\E^{A_-(\lambda)t}$ and $\overline{\phi}\E^{-A_-(\lambda)t}$ solve both equations \eqref{NLSEq:2} and \eqref{NLSEq:3}. This implies that once again we obtain similar expressions to  \eqref{NLSEq:9}, that is
\begin{equation*}
\begin{split}
\phi_t&=\begin{pmatrix}A-A_-& B\\ C& -A-A_-\end{pmatrix}\phi\\
\overline{\phi}_t&=\begin{pmatrix}A+A_-& B\\ C& -A+A_-\end{pmatrix}\overline{\phi},
\end{split}
\end{equation*}
and for the evolution equations for the scattering data, we will focus only in the special case  $A_+=A_-=\I\lambda C_+=\I\lambda C_-$ and $B_+=B_-=0$. Thus, as in \eqref{NLSEq:10}
\begin{equation*}
\begin{split}
&a_t=0\\
&b_t=-2A_+b\\
&\overline{a}_t=0\\
&\overline{b}_t=2A_+\overline{b}.
\end{split}
\end{equation*}
Notice that the usual transmission and reflection coefficients for the Schr\"odinger equation are given by
\begin{equation*}
\begin{split}
&T(\lambda)=\frac{1}{a(\lambda)}\\
&R(\lambda)=\frac{b(\lambda)}{a(\lambda)}.
\end{split}
\end{equation*}
Finally, focusing on the Schr\"odinger equation \eqref{KdVEq:0}, and not on the evolution, we can just follow the analysis which is done in \cite{DT} by P. Deift and E. Trubowitz. We will be interested in the Marchenko type equations and the integral formulas for the Jost solutions and the transmission and reflection coefficients. 

We will assume that $\int_{\mathbb{R}}(1+|x|)|q|dx<\infty$. Thus,  if we set $m_1=\E^{-\I\lambda x}f_1$ and $m_2=\E^{\I\lambda x}f_2$, where $f_1,f_2$ are  the Jost solutions for \eqref{KdVEq:0},  whose asymptotic behaviours are $f_1\to\E^{\I\lambda x}$, as $x\to\infty$ and $f_2\to\E^{-\I\lambda_x}$, as $x\to\-\infty$, then, see Lemma 1 in \cite{DT},

\begin{equation}\label{KdVEq:4}
\begin{split}
&m_1(x,\lambda)=1-\frac{1}{2\I\lambda}\int_{x}^\infty q(t)m_1(t,k)(\E^{2\I\lambda(t-x)}-1)dt\\
&m_1(x,\lambda)=1-\frac{1}{2\I\lambda}\int_{-\infty}^x q(t)m_2(t,k)(\E^{2\I\lambda(x-t)}-1)dt,
\end{split}
\end{equation}
where $\im(\lambda)\geq 0$.

In addition, by Lemma 3 in \cite{DT}, we have that there exist $B_1(x,y), B_2(x,y)$ such that 
\begin{equation*}
\begin{split}
&m_1(x,\lambda)=1+\int_0^\infty B_1(x,y)\E^{2\I\lambda y}dy\\
&m_2(x,\lambda)=1+\int_{-\infty}^0 B_2(x,y)\E^{-2\I\lambda y}dy,
\end{split}
\end{equation*}
where $B_1,B_2$ satisfy:
\begin{equation*}
\begin{split}
&B_1(x,y)=\int_{x+y}^\infty q(t)dt+\int_0^y\int_{x+y-z}^\infty q(t)B_1(t,z)dzdt\\
&B_2(x,y)=\int_{-\infty}^{x+y}q(t)dt+\int_y^0\int_{-\infty}^{x+y-z}q(t)B_2(t,z)dtdz
\end{split}
\end{equation*}
and hence
\begin{equation}
-\frac{\partial B_1(x,0+)}{\partial x}=-\frac{\partial B_1(x,0+)}{\partial y}=\frac{\partial B_2(x,0-)}{\partial x}=\frac{\partial B_2(x,0-)}{\partial y}=q(x)
\end{equation}

For $\lambda\in\mathbb{R}\setminus\{0\}$, $f_1(x,\lambda)$ and $f_1(x,-\lambda)$ are two independent solutions, as it follows from the Wronskian:
\begin{equation*}
[f_1(x,\lambda),f_1(x,-\lambda)]=f^\prime_1(x,\lambda)f_1(x,-\lambda)-f_1(x,\lambda)f_1^\prime(x,-\lambda)=2\I\lambda\neq 0.
\end{equation*}
Similarly, $[f_2(x,\lambda),f_2(x,-\lambda)]=-2\I\lambda\neq 0$. Thus, the transmission and reflection coefficients appear naturally and 

\begin{equation*}
\begin{split}
&f_2(x,\lambda)=\frac{R_1(\lambda)}{T(\lambda)}f_1(x,\lambda)+\frac{1}{T(\lambda)}f_1(x,-\lambda)\\
&f_1(x,\lambda)=\frac{R_2(\lambda)}{T(\lambda)}f_2(x,\lambda)+\frac{1}{T(\lambda)}f_2(x,-\lambda),
\end{split}
\end{equation*}
and hence

\begin{equation}\label{KdVEq:5}
\begin{split}
&T(\lambda)m_2(x,\lambda)=R_1(\lambda)\E^{2\I\lambda x}m_1(x,\lambda)+m_1(x,-\lambda),\\
&T(\lambda)m_1(x,\lambda)=R_2(\lambda)\E^{-2\I\lambda x}m_2(x,\lambda)+m_2(x,-\lambda).
\end{split}
\end{equation}
Combining these two last expressions \eqref{KdVEq:4} and \eqref{KdVEq:5}, we obtain the integral form for the scattering coefficients:
\begin{equation}\label{KdVEq:6}
\begin{split}
&\frac{R_2(\lambda)}{T(\lambda)}=\frac{-1}{2\I\lambda}\int_{\mathbb{R}}\E^{2\I\lambda t}q(t)m_1(t,\lambda)dt,\\
&\frac{1}{T(\lambda)}=1+\frac{1}{2\I\lambda}\int_{\mathbb{R}}q(t)m_1(t,\lambda)dt\\
&\frac{1}{T(\lambda)}=1+\frac{1}{2\I\lambda}\int_{\mathbb{R}}q(t)m_2(t,\lambda)dt\\
&\frac{R_1(\lambda)}{T(\lambda)}=\frac{-1}{2\I\lambda}\int_{\mathbb{R}}\E^{-2\I\lambda t}q(t)m_2(t,\lambda)dt.
\end{split}
\end{equation}
Finally, all the eigenvalues of \eqref{KdVEq:0} comes from the zeros of $1/T(\lambda)$:  $\{\I\beta_n\}_{n=1}^N,$ being all of them simple zeros and $\beta_n>0$. We are now in conditions to establish the Marchenko type equations:

\begin{equation}\label{Revision01}
\begin{split}
&F_1(x+y)+B_1(x,y)+\int_0^\infty F_1(x+y+t)B_1(x,t)dt=0\\
&F_2(x+y)+B_2(x,y)+\int_{-\infty}^0F_2(x+y+t)B_2(x,t)dt=0,
\end{split}
\end{equation}
where $F_1(x)=\pi^{-1}\int_\mathbb{R}R_1\E^{2\I\lambda x}d\lambda+2\sum_{n=1}^Nc_n\E^{-2\beta_n x}$,$F_2(x)=\pi^{-1}\int_\mathbb{R}R_2\E^{-2\I\lambda x}d\lambda+2\sum_{n=1}^Nc_n\E^{2\beta_n x}$, and  $c_n=\Big(\int f_1^2(x,\I\beta_n)dx\Big)^{-1}$ are the norming constants.

To summarize, these are the main facts of the {\em KDV case}:
\begin{itemize}
\item[i)] $r=-1$.
\item[ii)]$\int_{\mathbb{R}}(1+|x|)|q(x,t)|dx<\infty$.
\item[iii)]$A_0(\lambda)=A_-(\lambda)=A_+(\lambda)=\I\lambda C_+=\I\lambda C_-$, $B_-=B+=0$.
\item[iv)]The system of equations \eqref{Revision01} has a unique solution for all time $t$.
\item[v)]The zeros (bound states) of $a(\lambda)$, $\{\lambda_k\}_{k\geq 1}^N$  are finite, independent of time and $\im(\lambda_k)>0$.
\end{itemize}
In addition, if we set $A_0(z)=-4\I z^3$ or $A_0(z)=-\I z$, then we obtain the Korteweg-de Vries equation and the first term of the Korteweg-de Vries hierarchy ($u_t=u_x$), respectively. As we observed in the previous subsection, if we focus on the linear part of these equations, then $w(k)=-\I k^3$ (Korteweg-de Vries equation) and $w(k)=\I k$ ($u_t=u_x$) and the similarities of $A_0(z)$ and $w(k)$ are obvious.

\section{Main result}

As it is shown in \cite{DT}, we observe that from \eqref{NLSEq:6} and \eqref{NLSEq:15} we obtain an integral representation for the scattering coefficients. We are interested only in $b(\lambda),\overline{b}(\lambda)$:
\begin{equation*}
\begin{split}
\E^{-\I\lambda x}\phi_2(x)&=\int_{-\infty}^x\E^{-\I\lambda y}r(y)\phi_1(y)dy\to\int_{-\infty}^\infty\E^{-\I\lambda y}r(y)\phi_1(y)dy,\quad \text{and }\\
\E^{-\I\lambda x}\phi_2(x)&\to b(\lambda)\quad \text{as }x\to\infty,\\
\end{split}
\end{equation*}
thus
\begin{equation}\label{MREq:2}
b(\lambda)=\int_{-\infty}^\infty\E^{-\I\lambda y}r(y)\phi_1(y)dy.
\end{equation}
Similarly we obtain
\begin{equation}\label{MREq:3}
\overline{b}(\lambda)=\int_{-\infty}^\infty \E^{\I\lambda y}q(y)\overline{\phi}_2(y)dy.
\end{equation}
Notice that in general $\lambda\in\mathbb{R}$. Thus, to extend these formulas beyond the real line, we need some decay of the potentials $q(x),r(x)$ as $|x|\to\infty$.

\begin{lemma}\label{Lemma:1}
Let $(q,r)$ be a solution for the coupled equations in \eqref{NLSEq:1} and assume that we are in the NLS case. If for some $t_0$ :
\begin{itemize}
\item[i)]   $\int_{\mathbb{R}}(1+|x|)|q(x,t_0)|dx<\infty$ and $\int_{\mathbb{R}}(1+|x|)|r(x,t_0)|dx<\infty$,
\item[ii)] $|q(x,t_0)|\leq C_1\E^{-C_2 x^{1+\beta}} $, $|r(x,t_0)|\leq C_3\E^{-C_4 x^{1+\delta}}$, where $C_i,\delta,\beta>0$  are constants $i=1,2,3,4$ and $x\geq 0$, and
\item[iii)] the system is reflectionless, that is $b(\lambda,t_0)=\overline{b}(\lambda,t_0)=0$,
\end{itemize}
then $q(x,t)=r(x,t)=0$ for all $x,t\in\mathbb{R}$.
\end{lemma}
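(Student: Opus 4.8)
The plan is to exploit that the reflectionless hypothesis (iii) already annihilates the contribution of the continuous spectrum, so the only remaining obstruction to $q(\cdot,t_0)\equiv r(\cdot,t_0)\equiv 0$ is the possible presence of bound states, i.e.\ zeros of $a$ and $\overline{a}$. The strong decay (ii) will be used precisely to exclude these. Concretely, I would first show that $b$ and $\overline{b}$ extend analytically beyond the real line, then deduce from (iii) that these extensions vanish identically, then read off that there are no bound states, and finally invoke the Marchenko equations \eqref{NLSEq:12} together with the time evolution \eqref{NLSEq:11} to propagate the triviality to all $x$ and $t$.

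\emph{Analytic continuation of the scattering coefficients.} Starting from the integral representation \eqref{MREq:2} and writing $\phi_1(\lambda,y)=\E^{-\I\lambda y}m_1(\lambda,y)$, where $m_1=\E^{\I\lambda x}\phi_1$ is the modified Jost function governed by the Volterra equation in \eqref{NLSEq:15}, one gets $b(\lambda)=\int_{\R}\E^{-2\I\lambda y}r(y)m_1(\lambda,y)\,dy$. For $\im\lambda>0$ the factor $\E^{-2\I\lambda y}$ grows like $\E^{2\im\lambda\, y}$ as $y\to+\infty$, but this is overwhelmed by the bound $|r(y)|\leq C_3\E^{-C_4 y^{1+\delta}}$ from (ii); as $y\to-\infty$ that same factor decays and the $L^1$-bound (i) suffices, while $m_1$ stays bounded and analytic for $\im\lambda\geq 0$. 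Hence the integral converges locally uniformly and defines a function $b$ analytic in the upper half-plane and continuous up to $\R$. The analogous computation with \eqref{MREq:3}, the decay $|q(y)|\leq C_1\E^{-C_2 y^{1+\beta}}$, and the representation of $\overline{\phi}_2$ in \eqref{NLSEq:15} shows that $\overline{b}$ extends analytically to the lower half-plane. For this lemma only half-plane analyticity is needed, not the finer order/type estimates of Section~2.1.

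\emph{Vanishing and absence of bound states.} By hypothesis (iii), $b\equiv 0$ on $\R$. Since $b$ is analytic in the upper half-plane and continuous up to the real axis, the Schwarz reflection principle extends it analytically across $\R$, and an analytic function vanishing on $\R$ (a set with limit points) must vanish identically; thus $b\equiv 0$ throughout $\im\lambda>0$, and likewise $\overline{b}\equiv 0$ for $\im\lambda<0$. On the other hand, each bound state $\lambda_k$ is a zero of $a$ at which $\phi(\lambda_k,t_0)=b_k(t_0)\psi(\lambda_k,t_0)$; evaluating the relation $\phi=a\overline{\psi}+b\psi$ from \eqref{NLSEq:6} at $\lambda_k$ gives $b_k(t_0)=b(\lambda_k,t_0)$, and since $\phi(\lambda_k,\cdot)$ and $\psi(\lambda_k,\cdot)$ are both nontrivial we must have $b_k(t_0)\neq 0$. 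This contradicts $b\equiv 0$, so $a$ has no zeros in the upper half-plane; the same argument with $\overline{b}$ shows $\overline{a}$ has none in the lower half-plane. Hence $N=\overline{N}=0$.

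\emph{Conclusion.} With $b=\overline{b}=0$ and no discrete eigenvalues, the kernels $F,\overline{F},G,\overline{G}$ of the Marchenko equations \eqref{NLSEq:12} all vanish, so by the assumed uniqueness of their solution the integral kernels satisfy $K=\overline{K}=L=\overline{L}\equiv 0$. The trace formulas \eqref{NLSEq:14} then give $q(x,t_0)=-2K_1(x,x)=0$ and $r(x,t_0)=2L_2(x,x)=0$ for all $x$. Finally, the trivial scattering data is preserved by the evolution \eqref{NLSEq:11}, the reflection coefficients staying zero and no eigenvalues being created, so inverse scattering yields $q(\cdot,t)\equiv r(\cdot,t)\equiv 0$ for every $t$. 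The main obstacle I anticipate is the rigorous justification of the analytic continuation: one must bound $m_1(\lambda,y)$ (and the analogous modified solution entering $\overline{\phi}_2$) uniformly in the relevant half-plane so the defining integrals converge and depend analytically on $\lambda$, and one must verify that this continuation coincides with the coefficient appearing in \eqref{NLSEq:6}, so that the identity $b_k=b(\lambda_k)$ is legitimate.
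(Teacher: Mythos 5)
Your overall strategy is sound but genuinely different from the paper's, and it has one unproven step at its core. The paper never continues $b$ analytically in this lemma: it uses the reflectionless hypothesis to write $F,\overline{F}$ as finite exponential sums over the bound states, solves the Marchenko system \eqref{NLSEq:12} explicitly by Cramer's rule (the pure-soliton determinant formulas \eqref{EqLemma:7}--\eqref{EqLemma:8}), and then extracts the lower bound \eqref{EqLemma:9}: along some sequence $x_n\to\infty$ the resulting $q$ decays at least like $\E^{-x_n\im(\alpha_{k_0})}$, i.e.\ at most exponentially fast, which contradicts hypothesis (ii) unless there are no bound states at all. So the paper spends hypothesis (ii) exactly once, against this exponential lower bound, and never needs to move the scattering relation \eqref{NLSEq:6} off the real axis. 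You instead spend (ii) on continuing $b$ into the upper half plane (essentially the same computation the paper performs later, in the proof of Theorem \ref{theorem:1}), use Schwarz reflection plus the identity theorem to get $b\equiv 0$ there, and then exclude bound states via the identity $b_k=b(\lambda_k)$. The endgame (all kernels vanish by uniqueness of the Marchenko solution, propagation in $t$ by \eqref{NLSEq:11}) is the same in both proofs. Your route reuses machinery needed anyway for Theorem \ref{theorem:1}; the paper's route buys a quantitative fact (reflectionless solutions with bound states cannot decay super-exponentially) without any continuation argument.

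The gap is precisely the identity $b_k=b(\lambda_k)$, which you assert by ``evaluating \eqref{NLSEq:6} at $\lambda_k$'' and flag at the end as needing verification: it does not follow from what you established. The relation $\phi=a\overline{\psi}+b\psi$ holds a priori only for $\lambda\in\R$, and $\overline{\psi}$ is naturally analytic only in the \emph{lower} half plane, so evaluation at $\lambda_k$ with $\im\lambda_k>0$ is not licensed until you continue $\overline{\psi}$ (hence the whole relation) across $\R$ --- a separate Volterra argument that again consumes the super-exponential decay (ii). A cleaner patch uses only objects already analytic in the upper half plane: letting $x\to-\infty$ in the representation of $\psi_2$ in \eqref{NLSEq:15} gives
\begin{equation*}
a(\lambda)=1-\int_{\R}r(y)\psi_1(\lambda,y)\E^{-\I\lambda y}\,dy,
\end{equation*}
an identity valid for all $\im\lambda\geq 0$ under hypothesis (i) alone, since the Volterra bound $|\psi_1(\lambda,y)|\leq C|\E^{\I\lambda y}|$ makes the integrand $O(|r(y)|)$. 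Then at a bound state, where $\phi(\lambda_k,\cdot)=b_k\psi(\lambda_k,\cdot)$ with $b_k\neq0$, your continued formula \eqref{MREq:2} yields $b(\lambda_k)=b_k\int_\R r(y)\psi_1(\lambda_k,y)\E^{-\I\lambda_k y}dy=b_k\bigl(1-a(\lambda_k)\bigr)=b_k$, because $a(\lambda_k)=0$. With this inserted, your contradiction ($b_k\neq 0$ while $b\equiv0$ in the upper half plane) is rigorous and the remainder of your argument goes through.
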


\begin{proof}

We will follow a similar strategy as in \cite{Mar}, Chapter 4, section 2, to show how the solution $(q,r)$ looks like. Thus, we will use the first two equations of \eqref{NLSEq:12}, since by \eqref{NLSEq:14}, we know $K_1(x,x)=-1/2q(x)$ and $\overline{K}_2(x,x)=1/2r(x)$.

By hypothesis, the potentials are reflectionless and hence we know that $F(z;t)=-\I\sum_{k=1}^N m_k\E^{\I\lambda z-2A_0(\lambda_k)t}$ and $\overline{F}(z;t)=\I\sum_{k=1}^{\overline{N}}\overline{m}_k\E^{-\I\overline{\lambda}_kz+2A_0(\overline{\lambda}_k)t}$, where $m_k,\overline{m}_k$ are constants which are related with the residues of $1/a(\lambda)$ and $1/\overline{a}(\lambda)$ respectively. Setting 
\begin{equation*}
\begin{split}
&P_k(x)=\begin{pmatrix}P_k^{(1)}(x)\\P_k^{(2)(x)}\end{pmatrix}=\I\overline{m}_k\Big\{\begin{pmatrix}1\\0\end{pmatrix}\E^{-\I\overline{\lambda}_k x}+\int_x^\infty\overline{K}(x,s;t)\E^{-\I\overline{\lambda}_ks}ds\Big\},\quad 1\leq k\leq \overline{N}\\
&Q_k(x)=\begin{pmatrix}Q_k^{(1)}(x)\\Q_k^{(2)}(x)\end{pmatrix}=\I m_k\Big\{\begin{pmatrix}0\\1\end{pmatrix}\E^{\I\lambda_k x}+\int_x^\infty K(x,s;t)\E^{\I\lambda_ks}ds\Big\},\quad 1\leq k\leq N,
\end{split}
\end{equation*}
and using \eqref{NLSEq:12} we arrive to
\begin{equation}\label{EqLemma:2}
\begin{split}
&K(x,y;t)=\sum_{k=1}^{\overline{N}}\E^{-\I\overline{\lambda}_ky+2A_0(\overline{\lambda}_k)t}P_k(x)\\
&\overline{K}(x,y;t)=\sum_{k=1}^{N}\E^{\I\lambda_ky-2A_0(\lambda_k)t}Q_k(x).
\end{split}
\end{equation}
Substituting this expression into \eqref{NLSEq:12}, we obtain the identities
\begin{equation*}
\begin{split}
&\sum_{k=1}^{\overline{N}}\E^{-\I\overline{\lambda}_ky+2A_0(\overline{\lambda}_k)t}\Big\{P_k(x)-\I\overline{m}_k\begin{pmatrix}1\\0\end{pmatrix}\E^{-\I\overline{\lambda}_kx}\\
&-\I\overline{m}_k\sum_{l=1}^NQ_l(x)\E^{-2A_0(\lambda_l)t}\frac{\E^{(\I\lambda_l-\I\overline{\lambda}_k)x}}{\I\overline{\lambda}_k-\I\lambda_l}\Big\}=0,\quad \text{and}\\
&\sum_{k=1}^N\E^{\I\lambda_k-2A_0(\lambda_k)t}\Big\{Q_k(x)-\I m_k\begin{pmatrix}0\\1\end{pmatrix}\E^{\I\lambda_k x}\\
&-\I m_k\sum_{l=1}^{\overline{N}}P_l(x)\E^{2A_0(\overline{\lambda}_l)t}\frac{\E^{(\I\lambda_k-\I\overline{\lambda}_l)x}}{\I\overline{\lambda}_l-\I\lambda_k}\Big\}=0. 
\end{split}
\end{equation*}
Thus, we have the following  equations:
\begin{equation}\label{EqLemma:4}
P_k(x)-\I\overline{m}_k\begin{pmatrix}1\\0\end{pmatrix}\E^{-\I\overline{\lambda}_kx}-\I\overline{m}_k\sum_{l=1}^NQ_l(x)\E^{-2A_0(\lambda_l)t}\frac{\E^{(\I\lambda_l-\I\overline{\lambda}_k)x}}{\I\overline{\lambda}_k-\I\lambda_l}=0
\end{equation}
\begin{equation}\label{EqLemma:41}
Q_k(x)-\I m_k\begin{pmatrix}0\\1\end{pmatrix}\E^{\I\lambda_k x}-\I m_k\sum_{l=1}^{\overline{N}}P_l(x)\E^{2A_0(\overline{\lambda}_l)t}\frac{\E^{(\I\lambda_k-\I\overline{\lambda}_l)x}}{\I\overline{\lambda}_l-\I\lambda_k}=0. 
\end{equation}
Substituting the value of $Q_k(x)$ from \eqref{EqLemma:41} into   \eqref{EqLemma:4}, we observe 
\begin{equation*}
\begin{split}
&P_k-\sum_{j=1}^{\overline{N}}P_j(x)\sum_{l=1}^N\frac{m_l\overline{m}_k\E^{2(A_0(\overline{\lambda}_j)-A_0(\lambda_l))t}}{(\lambda_l-\overline{\lambda}_k)(\lambda_l-\overline{\lambda}_j)}\E^{\I(2\lambda_l-\overline{\lambda}_k-\overline{\lambda}_j)x}=\\
&\I\overline{m}_k\begin{pmatrix}1\\0\end{pmatrix}\E^{-\I\overline{\lambda}_kx}-\I\sum_{l=1}^N\frac{m_l\overline{m}_k\E^{-2A_0(\lambda_l)t}}{\lambda_l-\overline{\lambda}_k}\E^{\I(2\lambda_l-\overline{\lambda}_k)x}\begin{pmatrix}0\\1\end{pmatrix},\quad 1\leq k\leq\overline{N},
\end{split}
\end{equation*}
and analogously  for $Q_k(x)$
\begin{equation*}
\begin{split}
&Q_k(x)-\sum_{j=1}^N Q_j(x)\sum_{l=1}^{\overline{N}}\frac{m_k\overline{m}_l\E^{2(A_0(\overline{\lambda}_l)-A_0(\lambda_j))t}}{(\lambda_j-\overline{\lambda}_l)(\lambda_k-\overline{\lambda}_l)}\E^{\I(\lambda_k+\lambda_j-2\overline{\lambda}_l)x}=\\
&\I m_k\begin{pmatrix}0\\1\end{pmatrix}\E^{\I\lambda_kx}-\I\sum_{l=1}^{\overline{N}}\frac{m_k\overline{m}_l\E^{2A_0(\overline{\lambda}_l)t}}{\lambda_k-\overline{\lambda}_l}\E^{\I(\lambda_k-2\overline{\lambda}_l)x}\begin{pmatrix}1\\0\end{pmatrix},\quad 1\leq k\leq N.
\end{split}
\end{equation*}
Thus, using the Cramer's rule
\begin{equation*}
\begin{split}
&P_k^{(1)}(x)=\Delta_k^P[\Delta_P]^{-1}\\
&Q_k^{(2)}(x)=\Delta_k^Q[\Delta_Q]^{-1},
\end{split}
\end{equation*}
where 
\begin{equation}\label{EqLemma:7}
\begin{split}
&\Delta_P=\det\Big(\delta_{kj}+\sum_{l=1}^N\frac{m_l\overline{m}_k\E^{2(A_0(\overline{\lambda}_j)-A_0(\lambda_l))t}}{(\lambda_l-\overline{\lambda}_k)(\lambda_l-\overline{\lambda}_j)}\E^{\I(2\lambda_l-\overline{\lambda}_k-\overline{\lambda}_j)x}\Big)_{1\leq k,j\leq \overline{N}}\\
&\Delta_Q=\det\Big(\delta_{kj}+\sum_{l=1}^{\overline{N}}\frac{m_k\overline{m}_l\E^{2(A_0(\overline{\lambda}_l)-A_0(\lambda_j))t}}{(\lambda_j-\overline{\lambda}_l)(\lambda_k-\overline{\lambda}_l)}\E^{\I(\lambda_k+\lambda_j-2\overline{\lambda}_l)x}\Big)_{1\leq k,j\leq N}.
\end{split}
\end{equation}
$\Delta_k^P$ is the determinant of the matrix, obtained from $\Delta_P$ upon replacing the $j-$column by $\I \overline{m}_k\E^{-\I\overline{\lambda}_kx}$, respectively, $\Delta_k^Q$ is obained from $\Delta_Q$, but in this case we substitute the $j-$column by $\I m_k\E^{\I\lambda_kx}$.

Now, using \eqref{EqLemma:2} and without lost of generality, we will assume that $t_0=0$
\begin{equation}\label{EqLemma:8}
\begin{split}
&k_1(x,x)=\frac{1}{\Delta_P}\sum_{k=1}^{\overline{N}}\Delta_k^P\E^{-\I\overline{\lambda}_kx}\\
&\overline{k}_2(x,x)=\frac{1}{\Delta_Q}\sum_{k=1}^N\Delta_k^Q\E^{\I\lambda_kx}.
\end{split}
\end{equation}
Notice that $\Delta_P,\Delta_Q\to1$ as $x\to\infty$ and \eqref{EqLemma:8} can be written in terms of the sum of exponentials, that is
\begin{equation}
\begin{split}
&k_1(x,x)=\frac{1}{\Delta_P}\sum_{k=1}^{\overline{N}^\prime}\E^{\I\alpha_kx}\\
&\overline{k}_2(x,x)=\frac{1}{\Delta_Q}\sum_{k=1}^{N^\prime}\E^{\I\beta_k x},
\end{split}
\end{equation}
where in general $\overline{N}^\prime\neq\overline{N}, N^\prime\neq N$. Moreover, we observe that $\im(\alpha_k),\im(\beta_k)>0$, this is due to $\im(\lambda_k)>0$ and $\im(\overline{\lambda}_k)<0$. Thus,  there exists an increasing sequence $\{x_n\}_{n\geq 1}$, such that $x_n\to\infty$, as $n\to\infty$ and for $n$ big enough, we have
\begin{equation}\label{EqLemma:9}
C\frac{\E^{-x_n\im(\alpha_{k_0})}}{1+D}\leq |k_1(x_n,x_n)|=\frac{1}{2}|q(x_n,0)|\leq C_1\E^{-C_2x_n^{1+\delta}},
\end{equation}
here $C,D>0$ are constants independent of $x_n$, but \eqref{EqLemma:9} leads us a contradiction unless $q(x,0)$ has no bound states. Similar calculations can be done for $k_2(x_n,x_n)=\frac{1}{2}r(x)$ and hence, it must be $r(x,0)=q(x,0)=0$. Since the bound states (eigenvalues) are independent of time, this means that there are no bound states for any time $t$. Using the evolution of the scattering data \eqref{NLSEq:11}, we observe that $b(\lambda,t)=\overline{b}(\lambda,t)=0$ for any time $t$. This means that $F(z;t)=\overline{F}(z;t)=0$ and hence, by \eqref{NLSEq:12}, $K(x,y;t)=\overline{K}(x,y;t)=0$ for any time $t$. In particular,  we have proved that $q(x,t)=r(x,t)=0$ for all $x,t\in\mathbb{R}$.

\end{proof}

\begin{theorem}\label{theorem:1}
Let $(q,r)$ be a solution for the coupled equations in \eqref{NLSEq:1}.  Suppose  that we are in the NLS case and at two different times $t_0<t_1$ we also have the following properties:
\begin{itemize}
\item[i)]   $\int_{\mathbb{R}}(1+|x|)|q(x,t_i)|dx<\infty$ and $\int_{\mathbb{R}}(1+|x|)|r(x,t_i)|dx<\infty$, for $i=0,1$, and
\item[ii)] $|r(x,t_j)|\leq C_1\E^{-C_2 x^{1+\delta}}$ and  $|q(x,t_j)|\leq C_3\E^{-C_4 x^{1+\beta}} $, where $C_i,\delta,\beta>0$,  $i=1,2,3,4$, $j=0,1$, are constants and  $x\geq 0$. The constants  $\delta,\beta$  also fulfill 
\begin{equation}\label{ThEq:1}
\limsup_{|\lambda|\to\infty}\frac{|\re(A_0(\lambda))|}{|\lambda|^\rho}=\infty,\quad\quad \rho>1+1/\delta, 1+1/\beta,
\end{equation}
\end{itemize}
 then $q(x,t)=r(x,t)=0$ for all $x,t\in\mathbb{R}$.
\end{theorem}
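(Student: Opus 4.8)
The plan is to reduce the theorem to Lemma~\ref{Lemma:1} by proving that the two–time decay hypothesis forces the system to be reflectionless at $t_0$, i.e. $b(\lambda,t_0)=\overline{b}(\lambda,t_0)=0$. Once this is known, hypotheses (i) and (ii) of the theorem at $j=0$ are precisely hypotheses (i)--(iii) of Lemma~\ref{Lemma:1}, and that lemma immediately yields $q(x,t)=r(x,t)=0$ for all $x,t$. So the whole burden is to establish reflectionlessness from the decay, and for that I would combine the integral representations \eqref{MREq:2}--\eqref{MREq:3} with the growth theory of Section~2.

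First I would continue $b(\cdot,t_j)$ and $\overline{b}(\cdot,t_j)$ off the real axis. Using \eqref{MREq:2} and the decay $|r(y,t_j)|\le C_1\E^{-C_2 y^{1+\delta}}$, one evaluates $b(\I\eta,t_j)=\int \E^{\eta y}r(y)\phi_1(y)\,dy$ by balancing $\E^{\eta y}$ against $\E^{-C_2y^{1+\delta}}$ (the maximiser sits near $y\sim\eta^{1/\delta}$), which gives $M_{b}(\eta)\le\exp(c\,\eta^{1+1/\delta})$; together with the boundedness of $b$ on $\R$ and a three–lines argument this shows $b(\cdot,t_j)$ is analytic in the upper half plane of order at most $1+1/\delta$. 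Symmetrically, \eqref{MREq:3} and the decay of $q$ show $\overline{b}(\cdot,t_j)$ is analytic in the lower half plane of order at most $1+1/\beta$. Now fix $\rho$ as in \eqref{ThEq:1}, so that $\rho>1+1/\delta$ and $\rho>1+1/\beta$; with respect to the order $\rho$ all four functions then have type zero, so their indicator functions satisfy $h_{\bullet}(\varphi)\le0$.

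Next I would insert the time evolution \eqref{NLSEq:11}, which reads $b(\lambda,t_1)=b(\lambda,t_0)\E^{-2A_0(\lambda)(t_1-t_0)}$ and $\overline{b}(\lambda,t_1)=\overline{b}(\lambda,t_0)\E^{2A_0(\lambda)(t_1-t_0)}$, and argue by contradiction assuming $b(\cdot,t_0)\not\equiv0$ (the case of $\overline{b}$ being symmetric in the lower half plane). Since $b(\cdot,t_0)$ is analytic in a half plane of opening $\pi>\pi/\rho$, the Phragm\'en--Lindel\"of bound \eqref{Eq:4} gives $h_{b}(\varphi)+h_{b}(\varphi+\pi/\rho)\ge0$, which combined with $h_b\le0$ forces $h_{b(\cdot,t_0)}(\varphi)=h_{b(\cdot,t_1)}(\varphi)=0$ on the relevant range. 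Taking moduli in the evolution law gives the identity $-2(t_1-t_0)\,\re A_0(\lambda)=\log|b(\lambda,t_1)|-\log|b(\lambda,t_0)|$, so a super–$\rho$ growth of $|\re A_0|$ forces one of $\log|b(\lambda,t_0)|$, $\log|b(\lambda,t_1)|$ to be as negative as $-c\,|\lambda|^{\rho}$; reading the identity in both directions and using $\overline{b}$ in the lower half plane, I would conclude $|\re A_0(\lambda)|=o(|\lambda|^{\rho})$ as $|\lambda|\to\infty$, contradicting \eqref{ThEq:1} and hence forcing $b(\cdot,t_0)\equiv\overline{b}(\cdot,t_0)\equiv0$.

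The main obstacle is precisely this last step. Because $1/b$ has poles at the zeros (eigenvalues) of $b$, one cannot simply compare the orders of the two factors in $b(\cdot,t_1)=b(\cdot,t_0)\E^{-2A_0(t_1-t_0)}$; moreover the vanishing–indicator information lives on rays, whereas \eqref{ThEq:1} is a $\limsup$ along a possibly sparse sequence. I expect the delicate part to be upgrading this ray/circle data to a genuine bound on the harmonic function $\re A_0$ in a full angular sector: one selects, via a minimum–modulus (Cartan) estimate, a sequence of radii $r_n\to\infty$ on which $|b(\lambda,t_0)|,|b(\lambda,t_1)|\ge\exp(-|\lambda|^{\rho_0})$ with $\rho_0<\rho$, and then fills in the annuli by the maximum principle (or a further use of Phragm\'en--Lindel\"of) so that the super–$\rho$ sequence of \eqref{ThEq:1} is actually captured. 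Once reflectionlessness at $t_0$ is secured, the conclusion follows verbatim from Lemma~\ref{Lemma:1}.
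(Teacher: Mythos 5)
Your proposal is correct and follows essentially the same route as the paper's own proof: analytic continuation of $b(\cdot,t_j)$ and $\overline{b}(\cdot,t_j)$ into the upper and lower half planes via \eqref{MREq:2}--\eqref{MREq:3}, a growth bound of order at most $1+1/\delta$ (resp.\ $1+1/\beta$) so that the indicators with respect to the order $\rho$ are nonpositive, the Phragm\'en--Lindel\"of inequality \eqref{Eq:4} to force $h_b\equiv 0$, a contradiction between the evolution law \eqref{NLSEq:11} and the hypothesis \eqref{ThEq:1}, and finally an application of Lemma \ref{Lemma:1}. The only divergence is at the step you single out as the main obstacle: the paper invokes no Cartan/minimum-modulus machinery, but simply splits the limsup in the identity $\log|b(\lambda,t_1)|=\log|b(\lambda,t_0)|-2(t_1-t_0)\re A_-(\lambda)$ along rays, in effect writing $0=h_{b(t_1)}(\varphi)=h_{b(t_0)}(\varphi)+\limsup_{r\to\infty}\bigl(-2\re A_-(r\E^{\I\varphi})\bigr)/r^{\rho}$ and declaring the last term divergent by \eqref{ThEq:1}, so the sparse-sequence and sign subtleties you identify are passed over rather than resolved there.
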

\begin{proof}
We will first prove that the scattering coefficients $b(\lambda,t_i),\overline{b}(\lambda,t_i)$, $i=0,1$, can be extended to the upper and lower half plane, respectively, and hence by \eqref{NLSEq:11} for all $t\in\mathbb{R}$. Then we will assume that these coefficients are different from zero and we will obtain a contradiction, which will  show that they must vanish and we will be able to apply Lemma \ref{Lemma:1}, proving the theorem.

Without lost of generality we can assume that $t_0=0$ and $t_1=1$. We begin with $b(\lambda,t_0)$, similar calculations can be applied to $b(\lambda,t_1)$. To make notation easier we will omit the parameter $t$. We know  by \eqref{MREq:2}  
\begin{equation}\label{MREq:4}
|b(\lambda)|\leq\int_{-\infty}^\infty|\E^{-\I\lambda y}r(y)\phi_1(y)|dy=\int_{-\infty}^\infty\E^{\im(\lambda)y}|r(y)\phi_1(y)|dy.
\end{equation}
Thus,  we  need to estimate first $\phi_1(\cdot)$. Using the same idea as in the proof of Lemma 1 in \cite{DT} and the integral representation \eqref{NLSEq:15}, we set
\begin{equation*}
\begin{cases}g_0(x)&=1\\ g_n(x)&=\int_{-\infty}^x M(\lambda,x,y)g_{n-1}(y)dy,\quad n\geq 1,\end{cases}
\end{equation*}
and 
\begin{equation}\label{MREq:5}
\E^{\I\lambda x}\phi_1(x)=\sum_{n\geq 0}g_n(x).
\end{equation}
To estimate $g_n(\cdot)$, we need first to estimate the kernel $M(\lambda,x,y)$:
\begin{equation*}
\begin{split}
|M(\lambda,x,y)|&\leq |r(y)|\int_y^x|\E^{2\I\lambda (z-y)}q(z)|dz=|r(y)|\int_y^x\E^{-2\im(\lambda )(z-y)}|q(z)|dz\\
&\leq|r(y)|\int_y^x|q(z)|dz\leq |r(y)|\int_{\mathbb{R}}|q(z)|dz=|r(y)|Q_0.
\end{split}
\end{equation*}
We have assumed that $\im(\lambda)\geq 0$. Thus, $-2\im(\lambda )(z-y)\leq 0$, since $z\geq y$, and we denote $Q_0=\int_{\mathbb{R}}|q(z)|dz<\infty$.

Using a simple induction, we arrive to
\begin{equation}\label{MREq:6}
|g_n(x)|\leq\frac{1}{n!}\Big(Q_0\int_{-\infty}^x|r(y)|dy\Big)^n.
\end{equation}
Combining \eqref{MREq:5} and \eqref{MREq:6}
\begin{equation*}
\begin{split}
|\phi_1(x)|&\leq |\E^{-\I\lambda x}|\sum_{n\geq 0}|g_n(x)|\leq \E^{\im(\lambda)x}\E^{Q_0\int_{-\infty}^x|r(y)|dy}\\
&\leq\E^{\im(\lambda)x+Q_0R_0},
\end{split}
\end{equation*}
where $R_0=\int_{\mathbb{R}}|r(x)|dx<\infty$. This means that \eqref{MREq:4} turns into
\begin{equation*}
\begin{split}
|b(\lambda)|&\leq D\int_{\mathbb{R}}\E^{\im(\lambda)y}|r(y)|\E^{\im(\lambda)y}dy\\
&=D\Big(\int_{-\infty}^0|r(y)|\E^{2\im(\lambda)y}dy+\int^{\infty}_0|r(y)|\E^{2\im(\lambda)y}dy\Big)\\
&\leq D\Big(\int_{-\infty}^0|r(y)|dy+\int^{\infty}_0|r(y)|\E^{2\im(\lambda)y}dy\Big)=D(I_1+I_2),
\end{split}
\end{equation*}
where $D=\exp(Q_0R_0)$ is a constant.  The first integral $I_1$ is trivially bounded by $R_0$. To see that $I_2$ is bounded as well, we need to use the exponential decay of $r(x)$:
\begin{equation*}
I_2\leq\int_0^\infty C_1\E^{-C_2y^{1+\delta}}\E^{2\im(\lambda)}dy=C_1\int_0^\infty\E^{y(2\im(\lambda)-C_2y^\delta)}dy.
\end{equation*}
Now, if $s\geq0$, then $f(s)=2\im(\lambda)-C_2s^\delta=0$ if and only if $s=[2\im(\lambda)/C_2]^{1/\delta}$. Thus,
\begin{equation*}
\begin{split}
I_2&\leq C_1\Big(\int_0^{[2\im(\lambda)/C_2]^{1/\delta}}\E^{y(2\im(\lambda)-C_2y^\delta)}dy+\int_{[2\im(\lambda)/C_2]^{1/\delta}}^\infty\E^{y(2\im(\lambda)-C_2y^\delta)}dy\Big)\\
&=C_1(J_1+J_2).
\end{split}
\end{equation*}
Since $\lambda$ is a fixed complex number, this means that $J_1<\infty$. Moreover $\lim_{s\to\infty}f(s)=-\infty$, thus there is an $s_0$, such that $f(s)<-N$ for all $s>s_0$ and $N\geq 1$ a constant. This proves the boundness of $J_2$ and hence that $b(\lambda,0)$ can be extended to the uper half plane  as a meromorphic function and using \eqref{NLSEq:11} for all t.

In order to prove that actually $b(\lambda)=0$, we need to bound  $J_1,J_2$ in terms of $\im(\lambda)$. A simple calculation shows us that the function $sf(s)$, $s\geq 0$, has a maximum at $s=[2\im(\lambda)/C_2(1+\delta)]^{1+\delta}$. Thus,
\begin{equation*}
\begin{split}
J_1&=\int_0^{[2\im(\lambda)/C_2]^{1/\delta}}\E^{sf(s)}ds\\
&\leq [2\im(\lambda)/C_2]^{1/\delta}\exp\Big\{\Big(\frac{2\im(\lambda)}{C_2(1+\delta)}\Big)^{1+\frac{1}{\delta}}C_2\delta\Big\}.
\end{split}
\end{equation*}

For $J_2$, we remark  that the function $f(s)$ is decreasing for $s\geq[2\im(\lambda)/C_2]^{1/\delta}$. Thus,  let $s_0$ be the number such that $2\im(\lambda)-C_2s_0^\lambda=-1$, then 
\begin{equation}\label{MREq:9}
J_2=\int_{[2\im(\lambda)/C_2]^{1/\delta}}^\infty\E^{sf(s)}ds=\int_{[2\im(\lambda)/C_2]^{1/\delta}}^{s_0}\E^{sf(s)}ds+\int_{s_0}^\infty \E^{sf(s)}ds.
\end{equation}
The first summand in \eqref{MREq:9} can be bounded using that $sf(s)$ has a maximum at $s=[2\im(\lambda)/C_2]^{1/\delta}$, for $s\geq [2\im(\lambda)/C_2]^{1/\delta}$
\begin{equation*}
\begin{split}
\int_{[2\im(\lambda)/C_2]^{1/\delta}}^{s_0}\E^{sf(s)}&\leq\Big([(2\im(\lambda)+1)/C_2]^{1/\delta}-[2\im(\lambda)/C_2]^{1/\delta}\Big)\\
&\leq [(2\im(\lambda)+1)/C_2]^{1/\delta}.
\end{split}
\end{equation*}
Using once more that the function $f(s)$ is decreasing for $s\geq [2\im(\lambda)/C_2]^{1/\delta}$, it follows that the second summand in \eqref{MREq:9} is bounded by
\begin{equation*}
\int_{s_0}^\infty\E^{sf(s)}ds\leq\int_{s_0}^\infty\E^{-s}ds\leq\int_0^\infty\E^{-s}= 1.
\end{equation*}
Hence, we have obtained
\begin{equation}\label{MREq:10}
J_2\leq [(2\im(\lambda)+1)/C_2]^{1/\delta}+1.
\end{equation}
Now, we are in conditions to study the indicator function of $b(\lambda)$, when $\im(\lambda)\geq 0$. Let $\rho$ be a constant as in \eqref{ThEq:1}. We suppose that $b(\lambda)\neq 0$ and we will obtain a contradiction.

Using \eqref{Eq:3}, we observe
\begin{equation*}
\begin{split}
\limsup_{|\lambda|\to\infty}\frac{\log|b(\lambda)|}{|\lambda|^\rho}&\leq \limsup_{|\lambda|\to\infty}\frac{\log|D(I_1+I_2|)}{|\lambda|^\rho}\\
&\leq\max\Big\{\limsup_{|\lambda|\to\infty}\frac{\log|I_1|}{|\lambda|^\rho},\limsup_{|\lambda|\to\infty}\frac{\log|I_2|}{|\lambda|^\rho}\Big\}\\
&\leq\max\Big\{0,\limsup_{|\lambda|\to\infty}\frac{\log|I_2|}{|\lambda|^\rho}\Big\}
\end{split}
\end{equation*}
and
\begin{equation*}
\limsup_{|\lambda|\to\infty}\frac{\log|I_2|}{|\lambda|^\rho}\leq\max\Big\{\limsup_{|\lambda|\to\infty}\frac{\log|J_1|}{|\lambda|^\rho},\limsup_{|\lambda|\to\infty}\frac{\log|J_2|}{|\lambda|^\rho}\Big\},
\end{equation*}
where
\begin{equation*}
\begin{split}
&\limsup_{|\lambda|\to\infty}\frac{\log|J_1|}{|\lambda|^\rho}\leq\limsup_{|\lambda|\to\infty}\frac{\log|(2\im(\lambda)/C_2)^{1/\delta}|}{|\lambda|^\rho}\\
&+\limsup_{|\lambda|\to\infty}\frac{\log\Big|\exp\{C_2\delta(2\im(\lambda)/C_2(1+\delta))^{1+1/\delta}\}\Big|}{|\lambda|^\rho}\\
&\leq C_2\delta\limsup_{|\lambda|\to\infty}\frac{\Big\{2\im(\lambda)/C_2(1+\delta)\Big\}^{1+1/\delta}}{|\lambda|^\rho}\\
&\leq C_2\delta\Big(\frac{2}{C_2(1+\delta)}\Big)\limsup_{|\lambda|\to\infty}\frac{|\lambda|^{1+1/\delta}}{|\lambda|^\rho}=0.
\end{split}
\end{equation*}
The last equality follows from \eqref{ThEq:1}, where we see that $\rho>1+1/\delta$. It remains to show how fast  $J_2$ decays as $|\lambda|\to\infty$. In order to do this, we will use \eqref{MREq:10}:
\begin{equation*}
\begin{split}
\limsup_{|\lambda|\to\infty}\frac{\log|J_2|}{|\lambda|^\rho}\leq\limsup_{|\lambda|\to\infty}\frac{\log\Big([(2\im(\lambda)+1)/C_2]^{1/\delta}\Big)}{|\lambda|^\rho}\leq\frac{1}{\delta}\limsup_{|\lambda|\to\infty}\frac{\log|\lambda|}{|\lambda|^\rho}=0.
\end{split}
\end{equation*}
Thus, we have shown that for $\im(\lambda)\geq 0$ and $\rho>1+1/\delta$
\begin{equation*}
\limsup_{|\lambda|\to\infty}\frac{\log|b(\lambda)|}{|\lambda|^\rho}\leq 0.
\end{equation*}
Now, using \eqref{Eq:4}, if we set $\lambda=r\E^{\I\varphi}$, $0\leq\varphi<\varphi+\pi/\rho\leq\pi$, we observe
\begin{equation}\label{MREq:11}
0\leq h_b(\varphi)+h_b(\varphi+\pi/\rho)\leq h_b(\varphi+\pi/\rho)\leq 0,
\end{equation}
where $h_b(\cdot)$ is the indicator function of $b(\lambda)$ with respect to the order $\rho$. Notice that similar calculations can be done for $h_b(\varphi)$ in \eqref{MREq:11} and hence $h_b(\varphi)=h_b(\varphi+\pi/\rho)=0$, for $0\leq\varphi<\varphi+\pi/\rho\leq\pi$.

By hypothesis of the theorem, on one hand we know that $h_{b(t_i)}(\varphi)=0$, for $i=0,1$ and on the other hand, using the evolution formulas for the scattering coefficients \eqref{NLSEq:11}, in particular $b(\lambda,t)=b(\lambda,0)\E^{-2A_-(\lambda)t}$, we obtain for $\lambda=r\E^{\I\varphi}$ ($0<\varphi<\varphi+\pi/\rho<\pi$)
\begin{equation*}
\begin{split}
0&=\limsup_{r\to\infty}\frac{\log|b(r\E^{\I\varphi},1)|}{r^\rho}=\limsup_{r\to\infty}\frac{\log|b(r\E^{\I\varphi},0)|}{r^\rho}\\
&+\limsup_{r\to\infty}\frac{\log|\E^{-2A_-(r\E^{\I\varphi})}|}{r^\rho}=\limsup_{r\to\infty}\frac{\log|\E^{-2A_-(r\E^{\I\varphi})}|}{r^\rho},
\end{split}
\end{equation*}
but the last equality  diverges by \eqref{ThEq:1}. This leads us to a contradiction unless $b(\lambda,0)=0$, and hence by \eqref{NLSEq:11} $b(\lambda,t)=0$ for all $t$. Similar calculations can be done for $\overline{b}(\lambda,t)$, to show that it also vanishes for all time $t$. Finally,  it remains to apply Lemma \ref{Lemma:1} and the theorem is proved.

\end{proof}

Similarly, we have the same results for the KdV case, that is:

\begin{lemma}\label{Lemma:2}
Let $(q,r)$ be a solution for the coupled equations in \eqref{NLSEq:1} and assume that we are in the KdV case, that is $r=-1$. If for some $t_0$ :
\begin{itemize}
\item[i)]   $\int_{\mathbb{R}}(1+|x|)|q(x,t_0)|dx<\infty$ ,
\item[ii)] $|q(x,t_0)|\leq C_1\E^{-C_2 x^{1+\beta}} $, where $C_i,\beta>0$ are constants, $i=1,2$, and $x\geq 0$, and
\item[iii)] the system is reflectionless, i.e., $R_1(\lambda,t_0)=0$,
\end{itemize}
then $q(x,t)=0$ for all $x,t\in\mathbb{R}$.
\end{lemma}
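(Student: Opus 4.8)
The plan is to follow the proof of Lemma \ref{Lemma:1} line by line, now in the scattering formalism of the KdV case. Since the system is reflectionless, $R_1(\lambda,t_0)=0$ (and by the symmetry $|R_1|=|R_2|$ of the Schr\"odinger problem also $R_2(\lambda,t_0)=0$), the kernels $F_1,F_2$ in the Marchenko equations \eqref{Revision01} reduce to pure soliton sums,
\[
F_1(x;t_0)=2\sum_{n=1}^N c_n\E^{-2\beta_n x},\qquad F_2(x;t_0)=2\sum_{n=1}^N c_n\E^{2\beta_n x},
\]
built from the finitely many bound states $\{\I\beta_n\}_{n=1}^N$, with $\beta_n>0$, and the norming constants $c_n$. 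I will work with the first equation of \eqref{Revision01}, since it recovers $q$.

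First I would solve that equation explicitly. Because $F_1$ is a finite sum of the exponentials $\E^{-2\beta_n x}$, the ansatz $B_1(x,y)=\sum_{n=1}^N\gamma_n(x)\E^{-2\beta_n y}$ is consistent, and substituting it collapses the integral equation into a finite Cauchy-type linear system for the coefficients $\gamma_n(x)$. Solving this system by Cramer's rule, exactly as in \eqref{EqLemma:7}--\eqref{EqLemma:8}, expresses $B_1(x,0+)=\sum_n\gamma_n(x)$ as a ratio of determinants whose entries are linear combinations of the $\E^{-2\beta_n x}$, with denominator $\Delta\to1$ as $x\to\infty$. Differentiating in $x$ and invoking the recovery formula $q(x)=-\partial_x B_1(x,0+)$ then writes the $N$-soliton potential as $q(x,t_0)=\Delta^{-1}\sum_k\E^{-\eta_k x}$, where the rates $\eta_k>0$ are finite sums of the $2\beta_n$; this is the KdV analogue of the sum-of-exponentials representation \eqref{EqLemma:8}.

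The decisive step is then the contradiction of \eqref{EqLemma:9}. Since the smallest rate $2\beta_{\min}>0$ governs the tail, there is an increasing sequence $x_n\to\infty$ along which the slowest-decaying mode is not cancelled, so that
\[
C\frac{\E^{-2\beta_{\min}x_n}}{1+D}\le|q(x_n,t_0)|\le C_1\E^{-C_2 x_n^{1+\beta}},
\]
with $C,D>0$ independent of $n$. Because $\E^{-C_2 x^{1+\beta}}$ decays faster than any exponential, this is impossible for $n$ large unless there are no bound states, i.e. $N=0$; hence $q(x,t_0)\equiv 0$.

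Finally I would propagate in time. The bound states are independent of $t$, so their absence persists for all $t$; together with the evolution $R_1(\lambda,t)=R_1(\lambda,0)\E^{-2A_0(\lambda)t}$, which keeps the system reflectionless, this forces $F_1(\cdot;t)\equiv 0$ for every $t$, whence $B_1(\cdot;t)\equiv 0$ by \eqref{Revision01} and $q(x,t)=0$ for all $x,t\in\mathbb{R}$ through the recovery formula. The main obstacle I expect is precisely the extraction of the nonvanishing exponential lower bound along $\{x_n\}$ from the determinantal expression---controlling the possible cancellation among the finitely many exponential modes---just as in the passage leading to \eqref{EqLemma:9}.
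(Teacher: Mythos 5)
Your proposal is correct and is essentially the argument the paper intends: the paper omits the proof of Lemma \ref{Lemma:2}, stating only that it follows the pattern of Lemma \ref{Lemma:1}, and your adaptation (reflectionless kernels $F_1,F_2$ reduce to finite soliton sums, Cramer's rule on the Marchenko equation \eqref{Revision01}, recovery via $q=-\partial_x B_1(x,0+)$, exponential lower bound along a sequence $x_n\to\infty$ contradicting the super-exponential decay unless $N=0$, then time propagation via constancy of bound states and of $a$) is exactly that pattern transplanted to the KdV scattering formalism. The one caveat you flag yourself---ruling out cancellation among the finitely many exponential modes to get the lower bound---is handled no more explicitly in the paper's own proof of Lemma \ref{Lemma:1}, so your write-up is at the same level of rigor as the original.
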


\begin{proof}
The proof is similar to the Lemma \ref{Lemma:1} and we omit it.
\end{proof}

\begin{theorem}\label{theorem:2}
Let $(q,r)$ be a solution for the coupled equations in \eqref{NLSEq:1}.  Suppose  that we are in the KdV case, that is, $r=-1$, and at two different times $t_0<t_1$ we also have the following properties:
\begin{itemize}
\item[i)]   $\int_{\mathbb{R}}(1+|x|)|q(x,t_i)|dx<\infty$ , $i=0,1$,
\item[ii)]  $|q(x,t_j)|\leq C_1\E^{-C_2x^{1+\delta}} $, where $C_i,\delta>0$ are constants,  $i=1,2$, $j=0,1$ and  $x\geq 0$, and
\item[iii)] $\delta$  fulfills
\begin{equation}
\limsup_{|\lambda|\to\infty}\frac{|\re(A_0(\lambda))|}{|\lambda|^\rho}=\infty,\quad\quad \rho>1+1/\delta,
\end{equation}
\end{itemize}
 then $q(x,t)=0$ for all $x,t\in\mathbb{R}$.
\end{theorem}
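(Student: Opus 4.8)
The plan is to follow the argument of Theorem~\ref{theorem:1} almost verbatim, replacing the scattering coefficient $b(\lambda)$ by the reflection coefficient $R_1(\lambda)$ of the Schr\"odinger operator and the representation \eqref{MREq:2} by the integral formula \eqref{KdVEq:6}. Since $r=-1$ has already been absorbed into the Schr\"odinger formulation \eqref{KdVEq:0}, the only potential entering the relevant scattering data is $q$, which is precisely the quantity for which hypothesis (ii) gives super-exponential decay. Concretely, I would first show that $R_1(\lambda)/T(\lambda)$ extends from the real line to a meromorphic function on the upper half-plane with $\limsup_{|\lambda|\to\infty}\log|R_1/T|/|\lambda|^\rho\le 0$ for $\rho>1+1/\delta$; then assume $R_1\not\equiv 0$, derive a contradiction from Phragm\'en--Lindel\"of together with the time evolution, and finally invoke Lemma~\ref{Lemma:2}.

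The analytic input is a bound on the Jost function $m_2(t,\lambda)$. Writing the Volterra equation \eqref{KdVEq:4} for $m_2$, I would iterate exactly as in the treatment of $g_n$ in Theorem~\ref{theorem:1}: the integration runs over $t\le x$, so for $\im(\lambda)\ge 0$ one has $|\E^{2\I\lambda(x-t)}|\le 1$, and the kernel is bounded by $C|q(t)|$ for $|\lambda|\ge 1$ (the prefactor $1/|2\lambda|$ only improves matters), yielding a uniform estimate $|m_2(t,\lambda)|\le\exp\big(C\int_{\mathbb{R}}|q|\big)$ on the closed upper half-plane with $|\lambda|\ge 1$. Feeding this into \eqref{KdVEq:6} and using $|\E^{-2\I\lambda t}|=\E^{2\im(\lambda)t}$, I would split the integral at $t=0$: the part over $t<0$ is bounded by $\int_{\mathbb{R}}|q|$ since $\E^{2\im(\lambda)t}\le 1$ there, while the part over $t>0$ is estimated exactly as the integral $I_2$ in Theorem~\ref{theorem:1} --- the decay $|q(t)|\le C_1\E^{-C_2 t^{1+\delta}}$ produces the same splitting into $J_1$ and $J_2$, hence the same bounds, and ultimately $h_{R_1}(\varphi)\le 0$ on $0\le\varphi\le\pi$ with respect to the order $\rho$.

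With this growth bound, the remainder is identical in form to Theorem~\ref{theorem:1}. Applying \eqref{Eq:4} on the sector $0\le\varphi<\varphi+\pi/\rho\le\pi$ forces the indicator function of $R_1$ to vanish there at each of the two times $t_0,t_1$. On the other hand, in the KdV case the scattering data evolves by $a_t=0$ and $b_t=-2A_+b$, so $R_1(\lambda,t)=R_1(\lambda,0)\E^{-2A_0(\lambda)t}$; comparing the vanishing indicator at $t_1=1$ with that at $t_0=0$ leaves the contribution $\limsup_{r\to\infty}\log|\E^{-2A_0(r\E^{\I\varphi})}|/r^\rho$, which diverges by hypothesis (iii). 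This contradiction is resolved only if $R_1(\lambda,0)=0$, so the system is reflectionless for all $t$, and Lemma~\ref{Lemma:2} then gives $q\equiv 0$.

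I expect the one genuinely new point --- as opposed to a transcription of the NLS argument --- to be the Jost-function estimate of the second step: one must verify that the non-decaying background $r=-1$ never enters the bound for $m_2$, i.e. that the uniform estimate on the closed upper half-plane depends only on $\int_{\mathbb{R}}(1+|x|)|q|\,dx$ and on the tail of $q$, and that the $1/(2\I\lambda)$ prefactor in \eqref{KdVEq:6} does not spoil the indicator computation at the finitely many bound states $\{\I\beta_n\}$, where $R_1/T$ is only meromorphic. Both are handled by the remark following \eqref{Eq:4}, which extends order, type and indicator function to functions analytic merely in $\{|z|>c\}$.
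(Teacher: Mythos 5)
Your proposal is correct and is essentially the paper's own argument: the paper's proof of Theorem~\ref{theorem:2} consists of the single remark that it ``follows the same pattern given in theorem~\ref{theorem:1}'', and your transcription --- replacing $b(\lambda)$ by $R_1/T$ via \eqref{KdVEq:6}, bounding $m_2$ by iterating the Volterra equation \eqref{KdVEq:4} for $\im(\lambda)\ge 0$, reusing the $I_1,I_2,J_1,J_2$ splitting and the indicator-function/Phragm\'en--Lindel\"of step \eqref{Eq:4}, exploiting the evolution $R_1(\lambda,t)=R_1(\lambda,0)\E^{-2A_0(\lambda)t}$ to contradict hypothesis (iii), and finally invoking Lemma~\ref{Lemma:2} --- fills in exactly the details the paper leaves implicit. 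The two points you single out as genuinely new (that the non-decaying background $r=-1$ never enters the bound on $m_2$, and that the $1/(2\I\lambda)$ prefactor and the finitely many bound states are absorbed by the extension of order/type/indicator to functions analytic in $\{|z|>c\}$) are precisely the technical checks needed to make the paper's one-line proof rigorous.
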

\begin{proof}
The proof, as it happened with Lemma \ref{Lemma:2}, follows the same pattern given in theorem \ref{theorem:1}.
\end{proof}

As we remarked in the introduction, if we set $A_0(z)=-2\I z^2$ and $r=-q^*$, then the coupled equations \eqref{NLSEq:1} become the nonlinear Schr\"odinger equation. Moreover, to apply theorem \ref{theorem:1} we only need decay on $r$ or $q$ and to fulfill condition $ii)$ of the theorem we need
\begin{equation*}
\limsup_{|\lambda|\to\infty}\frac{|\re(-2\I z^2)|}{|\lambda|^\rho}=\infty,\quad \rho>1+1/\delta.
\end{equation*}

This means that $\rho<2$ and hence we observe that it is enough to have $\delta>1$ for the decay condition. Similarly, if we write $r=-1$ and $A_0(z)=-4\I z^3$, then  the equation \eqref{NLSEq:1} tuns into the Korteweg-de Vries equation and  $\delta>1/2$. A similar argument can be used for  the modified Korteweg-de Vries equation and $\delta>1/2$.  However, we notice that this technique cannot be applied to the first member of the Korteweg-de Vries hierarhy 
\begin{equation}\label{Rem:001}
u_t=u_x.
\end{equation}
This is due to the fact that \eqref{Rem:001} is equivalent to $\eqref{NLSEq:1}$ when  $r=-1$ and $A_0(z)=-\I z$, i.e, the phase velocity is constant and hence all the waves move at same velocity. In addition,  condition $ii)$ of theorem \ref{theorem:2} implies that $\rho>1+1/\delta$, but this is not possible because we also need 
$\limsup_{|\lambda|\to\infty}\frac{|\re(-2\I z^2)|}{|\lambda|^\rho}=\infty$, that is $\rho<1$.

As a final remark, we point out that the decay condition on the theorems \ref{theorem:1} and \ref{theorem:2} can also be on the left hand side of the real line and a similar argument can be applied to adjust the proofs of the previous results.

\medskip

\noindent 
{\bf Acknowledgments.}
I am indebted to  Gerald Teschl for  productive discussions on this topic.

\end{document}